\documentclass[journal,fleqn]{IEEEtran}
\usepackage{amsmath} 
\usepackage{amsthm}
%\epstopdfsetup{outdir=./Figures}
%%\usepackage[normalem]{ulem}%used to strikethrough text
\usepackage{soul}
\usepackage{graphicx}
\usepackage{cases}
\usepackage{epstopdf} %converting to PDF
\graphicspath{{Figures/}}
\usepackage{algorithm}
\usepackage{amsfonts}
\usepackage{algcompatible}
\usepackage{xpatch}  
\usepackage{caption}
\usepackage{mathabx}
\usepackage{mathtools}
\usepackage[table]{xcolor}
\usepackage[utf8]{inputenc}% to create the list of nomenclature
\usepackage{textcomp} % to use the \textlbrackdbl command
\usepackage{nomencl}
\usepackage{array}
\usepackage{multirow}
\usepackage{float}% Pour fixer la position du foutu float
\usepackage[utf8]{inputenc}
\usepackage{soul} % to strikethrough text witha possibility to strike through titles not using the \protect command, but it didn't work 	d
%\usepackage[author={Antoine}]{pdfcomment}
%\usepackage[sorting=none,backend=biber]{biblatex}
%\addbibresource{references.bib}
\newcommand{\comment}[1]{}% Pour faire des commentaires en ligne
\newcommand{\w}{\!\!\!\!\!}
\makenomenclature
\setlength\mathindent{0pt}
\usepackage[noadjust]{cite}
 
\newtheorem{theorem}{Theorem}

\newtheorem{remark}{Remark}

\makeatletter
\newcommand{\mathleft}{\@fleqntrue\@mathmargin0pt}
\newcommand{\mathcenter}{\@fleqnfalse}
\xpatchcmd{\algorithmic}
{\ALG@tlm\z@}{\leftmargin\z@\ALG@tlm\z@}
{}{}
\makeatother

\begin{document}
	\title{Power Minimization in Distributed Antenna Systems using Non-Orthogonal Multiple Access and Mutual Successive Interference Cancellation}
	
	\author{~Joumana~Farah,~Antoine Kilzi,~Charbel~Abdel~Nour,~Catherine~Douillard
		\thanks{Copyright (c) 2015 IEEE. Personal use of this material is permitted. However, permission to use this material for any other purposes must be obtained from the IEEE by sending a request to pubs-permissions@ieee.org.}
		\thanks{J. Farah is with the Department of Electricity and Electronics,
			Faculty of Engineering, Lebanese University, Roumieh, Lebanon
			(\mbox{joumanafarah}@ul.edu.lb).}
		\thanks{A. Kilzi, C. Abdel Nour and C. Douillard are with IMT Atlantique, Lab-STICC, UBL, F-29238 Brest, France, (email: antoine.kilzi@imt-atlantique.fr; charbel.abdelnour@imt-
			atlantique.fr; catherine.douillard@imt-atlantique.fr).}}
	
	\maketitle
	\begin{abstract}
		This paper introduces new approaches for combining non-orthogonal multiple access with distributed antenna systems. The study targets a minimization of the total transmit power in each cell, under user rate and power multiplexing constraints. Several new suboptimal power allocation techniques are proposed. They are shown to yield very close performance to an optimal power allocation scheme. Also, a new approach based on mutual successive interference cancellation of paired users is proposed. Different	techniques are designed for the joint allocation of subcarriers, antennas, and power, with a particular care given to maintain a moderate complexity. The coupling of non-orthogonal multiple access to distributed antenna systems is shown to greatly outperform any other combination of orthogonal/non-orthogonal multiple access schemes with distributed or centralized deployment scenarios.
		
	\end{abstract}
	\begin{IEEEkeywords}
		Distributed Antenna Systems, Non Orthogonal Multiple Access, Power minimization, Resource allocation, Waterfilling, Mutual SIC.
	\end{IEEEkeywords}
	
	\section{Introduction}
	\IEEEPARstart{T}{he}  concept of distributed antenna systems (DAS), also known as distributed base stations, \cite{R1,R2} was introduced in the past few years in mobile communication systems to increase the cell coverage in a cost effective way, and to strengthen the network infrastructure, particularly in saturated areas. It consists of deploying the base station (BS) antennas in a distributed manner throughout the cell, instead of having multiple antennas installed on a single tower at the cell center. The remote units, called remote radio heads (RRH), are connected to the baseband unit (BBU) through high capacity coaxial cables or fiber optics. By reducing	the average distance of each mobile user to its transmitting/receiving antenna, the overall transmission power required to guarantee a certain quality of reception is reduced in comparison to the centralized configuration (centralized antenna system or CAS). Therefore, from an ecological standpoint, DAS can greatly reduce local electromagnetic radiation and $\text{CO}_2$ emissions of transmission systems. Alternatively, for the same overall transmission power as in CAS, DAS offers a higher capacity and a fairer throughput distribution between the active users of a cell. Moreover, it provides a	better framework for improving the system robustness to fading,	intra-cell and inter-cell interferences, shadowing, and path-loss. It also allows the system to better adapt to the varying user distribution. Moreover, the decoupled architecture will allow the deployment of small antennas in large scale and in discrete locations in urban areas, e.g. on building roofs, electric poles, traffic and street lights, where they can be almost invisible due to their small size. This will significantly simplify and reduce the cost of site installation, therefore lowering the capital expenditure (CAPEX) of mobile operators.
	
	Efficient implementation is key in squeezing the achievable	potentials out of DAS. For this purpose, the study in \cite{R3}
	explored the advantages of DAS and compared the achievable ergodic capacity for two different transmission scenarios: selection diversity and blanket transmission. In the first one, one of the RRHs is selected (based on a path-loss minimization criterion) for transmitting a given signal, whereas in the second, all antennas in the cell participate in each transmission, thus creating a macroscopic multiple antenna system. The results of \cite{R3} show that selection diversity achieves a better capacity in the DAS context, compared to blanket transmission. The same observations are made in \cite{R4}. In \cite{R5}, RRH selection is also preconized as a mean to decrease the number of information streams that need to be assembled from or conveyed to the	involved RRHs, as well as the signaling overhead. 
	\subsection{Energy Efficiency Maximization in DAS}
	Several works target the optimization of system energy efficiency (EE) in the DAS context. In \cite{R6}, two antenna selection techniques are proposed, either based on user path-loss information or on RRH energy consumption. Also, proportional fairness scheduling is considered for subband allocation with a utility function adapted to optimize the EE. In \cite{R7}, subcarrier assignment and power allocation (PA) are done in two separate stages. In the first one, the number of subcarriers per RRH is determined, and subcarrier/RRH assignment is performed assuming initial equal power distribution. In the second stage, power allocation (PA) is performed by maximizing the EE under the constraints of the total transmit power per RRH, of the targeted bit error rate and of a proportionally-fair throughput distribution among active users. The optimization techniques proposed in \cite{R6,R7} for DAS are designed for the case of orthogonal multiple access (OMA). In other words, they allow the allocation of only one user per subcarrier.
	\subsection{Power Domain Non-Orthogonal Multiple Access}
	Non-orthogonal multiple access (NOMA) has recently emerged as a promising multiple access technique to significantly improve the attainable spectral efficiency for fifth generation (5G) mobile networks. Power-domain NOMA enables the access of multiple users to the same frequency resource at the same time by taking advantage of the channel gain difference between users \cite{R9,R10,R11,R12,R13}, through signal power multiplexing. At the receiver side, user separation is performed using successive interference cancellation (SIC). Applying power multiplexing on top of the orthogonal frequency division multiplexing (OFDM) layer has proven to significantly increase system throughput compared to orthogonal signaling, while also improving fairness and cell-edge user experience. A few previous works have studied the application of NOMA in the DAS context. An outage probability analysis for the case of two users in cloud radio access networks (C-RAN) is provided in \cite{CRANZIGO} where all RRHs serve simultaneously both users. The results show the superiority of NOMA when compared to time division multiple access (TDMA), in the context of C-RANs. In \cite{CRANUplinkCooperation}, the study investigates the application of distributed NOMA for the uplink of C-RANs. The partially centralized C-RAN architecture allows the use of joint processing by distributed antennas, in which RRHs can exchange correctly decoded messages from other RRHs in order to perform SIC. In \cite{FronthaulCRAN}, an efficient end-to-end uplink transmission scheme is proposed where the wireless link between users and RRHs on one side, and the fronthaul links between the RRHs and BBU on the other side are studied. User grouping on blocks of subcarriers is proposed to mitigate the computational complexity, and a fronthaul adaptation for every user group is performed in order to strike a tradeoff between throughput and fronthaul usage.
	
	\subsection{State of the Art of Power Minimization in the NOMA Context}
	A few recent works tackle the power minimization problem in the NOMA context. In \cite{R14}, a ``relax-then-adjust'' procedure is used to provide a suboptimal solution to the NP-hard problem: first, the problem is relaxed from the constraints relative to power domain multiplexing. Then, the obtained solution is iteratively adjusted using a bisection search, leading to a relatively high complexity. In \cite{R15}, optimal PA is first conducted assuming a predefined fixed subcarrier assignment. Then, a deletion-based algorithm iteratively removes users from subcarriers until the constraints of the maximum number of	multiplexed users are satisfied, thus necessitating a large number of iterations to converge. In \cite{R16}, the authors propose an optimal and a suboptimal solution for determining the user scheduling, the SIC order, and the PA, for the case of a maximum of two users per subcarrier. However, the power multiplexing constraints are not taken into consideration. \textcolor{black}{The power multiplexing constraints state that the signal that is to be decoded first must have a higher power level than the other received signals, so that it is detectable at the receiver side.} Power minimization	strategies are also proposed in \cite{R17} for multiple-input multiple-output NOMA (MIMO-NOMA), where PA and receive beamforming design are alternated in an iterative way. Constraints on the targeted SINR (signal to interference and noise ratio) are considered to guarantee successful SIC decoding. The subcarrier allocation problem is not included, i.e. all users have access to the whole spectrum. Results, provided for a moderate number of users (4 or 6), show an important gain in performance with respect to OMA.

	In \cite{R18}, we have introduced a set of techniques that allow the joint allocation of subcarriers and power, with the aim of minimizing the total power in NOMA-CAS. Particularly, we showed that the most efficient method, from the power perspective, consists of applying user pairing at a subsequent stage to single-user assignment, i.e. after applying OMA signaling at the first stage, instead of jointly assigning collocated users to subcarriers. To the best of our knowledge, no previous work has studied the downlink power minimization problem in a DAS configuration and using NOMA.
	\subsection{Contributions}
	The main objective of this work is to study the potentials of applying NOMA in the DAS configuration from a power minimization perspective. We investigate the resource allocation (RA) problem in downlink, seeking the minimization of the total transmit power at the RRHs under user rate constraints. The contributions of this paper are summarized in the following :
	\begin{itemize}
		\item 
		We introduce several techniques that allow a significant complexity reduction of the waterfilling procedures used for PA in \cite{R18}, for both orthogonal and non-orthogonal transmission, while adapting the allocation techniques to the DAS context.
		\item We propose a new PA scheme for user pairing that outperforms FTPA (fractional transmit power allocation) \cite{R10,R11}, while taking into account the power multiplexing constraints.
		\item Unlike previous works, we investigate the use of different RRHs to power the multiplexed subcarriers in NOMA. This new setting gives rise to the concept of ``mutual SIC'' where paired users on a subcarrier can perform SIC at the same time, under well defined conditions.
		\item Finally, we propose new suboptimal algorithms to achieve joint subcarrier, RRH, and power allocation, in light of the newly uncovered potentials specific to the application of NOMA in the DAS context.
	\end{itemize}
	
	The rest of this paper is organized as follows: in Section II, a description of the system model is provided along with a formulation of the RA problem in the context of NOMA-DAS. Then, in Section III, several suboptimal solutions are investigated for the power minimization problem in the case of a single powering RRH per subcarrier. In Section IV, a novel approach allowing a mutual SIC implementation on certain subcarriers is introduced, followed by the proposal of several allocation techniques for exploiting such subcarriers. Section V provides a brief overview of the complexity of the proposed algorithms. Section VI presents a performance analysis of the different allocation strategies, while Section VII concludes the paper.	
	\section{Description of the NOMA-DAS System and Formulation of the Power Minimization Problem}
	This study is conducted on a downlink system consisting of a total of $R$ RRHs uniformly positioned over a cell where $K$ mobile users are randomly deployed (Fig. \ref{fig:1}). The RRHs are connected to the BBU through high capacity optical fibers. RRHs and users are assumed to be equipped with a single antenna. Users transmit their channel state information (CSI) to RRHs, and the BBU collects all the CSI from RRHs. The influence of imperfect channel estimation on the performance of DAS was studied in \cite{New}. However, perfect CSI is assumed throughout this study (the influence of imperfect or outdated CSI is not the aim of this work). Alternatively, the BBU can benefit from channel reciprocity to perform the downlink channel estimation by exploiting the uplink transmissions. Based on these estimations, the BBU allocates subcarriers, powers, and RRHs to users in such a way to guarantee a transmission rate of $R_{k,req}$ [bps] for each user $k$. The system bandwidth $B$ is equally divided into $S$ subcarriers. Each user $k$ is allocated a set $\mathcal{S}_k$ of subcarriers. From the set of $K$ users, a maximum of $m(n)$ users $\{k_1,k_2,\dots,k_{m(n)}\}$ are chosen to be collocated on the $n^{th}$ subcarrier $(1\leq n \leq S)$. Classical OMA signaling corresponds to the special case of $m(n) = 1$. Let	$h_{k_i,n,r}$ be the channel coefficient between user $k_i$ and RRH $r$ over subcarrier $n$, and $H$ the three-dimensional channel gain matrix with elements $h_{k,n,r}$, $1 \leq k \leq K$, $1 \leq n \leq S$, $1\leq r \leq R$. A user $k_i$ on subcarrier $n$ can remove the inter-user interference from any other user $k_j$, collocated on subcarrier $n$, whose channel gain	verifies $h_{k_j,n} < h_{k i , n}$ \cite{R9,R10} and treats the received signals from	other users as noise.
	
	As shown in Fig. \ref{fig:1}, NOMA subcarriers can be served by the same RRH or by different RRHs. For instance, one can consider serving User 1 and User 2 on the same subcarrier SC 1 by RRH 1, while User 2 and User 3 are paired on another subcarrier, SC 2, and served by RRH 1 and RRH 2 respectively.
	\begin{figure}[H]
		\centering
		\includegraphics[width=0.4\textwidth]{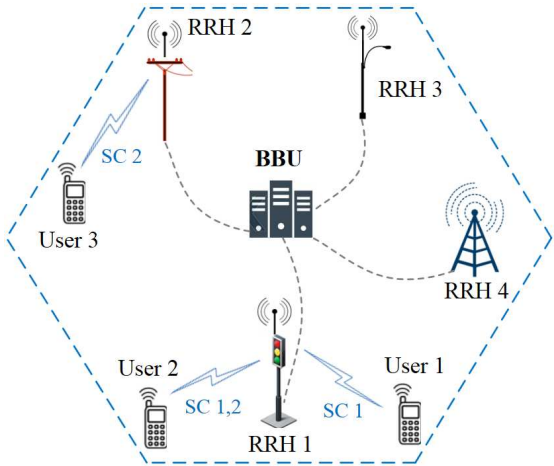}
		\captionsetup{justification=centering}
		\caption[centered]{DAS using NOMA}
		\label{fig:1}
	\end{figure}
	The capacity of the BBU-RRH links is assumed to be many orders of magnitude higher than the capacity of the RRH-users wireless links. This is due to the radical differences between the two channel media: interference immune, low loss, high bandwidth fibers with dedicated channels as opposed to the frequency selective, time varying, shared medium that is the wireless link. Clearly, the bottleneck of the system, in terms of capacity and power consumption, resides at the wireless link level, where large margins of improvement can be achieved.

	In the rest of the study, and without loss of generality, we will consider a maximum number of collocated users per subcarrier of 2, i.e. $m(n)$= 1 or 2. On the one hand, it has been shown \cite{R10} that the gain in performance obtained with the collocation of 3	users per subcarrier, compared to 2, is minor. On the other hand, limiting the number of multiplexed users per subcarrier limits the SIC complexity at the receiver terminals. We will denote by	first (resp. second) user on a subcarrier $n$ the user which has the higher (resp. lower) channel gain on $n$ between the two paired users. Let $P_{k_i,n,r}$ be the power of the $i^{th}$ user on subcarrier $n$ transmitted by RRH $r$. The theoretical throughputs $R_{k_i , n , r} , 1 \leq i \leq 2$, on $n$ are given by the Shannon capacity limit as follows:
	\begin{equation}
	R_{k_1,n,r} = \frac{B}{S} \log_2\left(1+\frac{P_{k_1,n,r}h_{k_1,n,r}^2}{\sigma^2}\right),
	\end{equation}
	\begin{equation}
	\label{rate2equation}
	R_{k_2,n,r} = \frac{B}{S} \log_2\left(1+\frac{P_{k_2,n,r}h_{k_2,n,r}^2}{P_{k_1,n,r}h_{k_2,n,r}^2+\sigma^2}\right), 
	\end{equation}
	where $N_0$ and $\sigma^2=N_0B/S$ are respectively the power spectral density and the power level (over a subcarrier) of additive white Gaussian noise,
	including randomized inter-cell interference, and assumed to be
	constant over all subcarriers.\\

	Let $\mathcal{T}_k$ be the	mapping set of RRHs corresponding to user $k$, such that the $i^{th}$	element of $\mathcal{T}_k$ corresponds to the RRH selected for powering the $i^{th}$ subcarrier from $\mathcal{S}_k$. Note that user $k$ can be first, second, or sole user on any of its allocated subcarriers in $\mathcal{S}_k$. Taking into account the power multiplexing constraints proper to NOMA, the corresponding optimization problem can be formulated as:
	\begin{equation*}
	\{\mathcal{S}_k,\mathcal{T}_k,P_{k,n,r}\}^* = \mathop{\arg\min}_{\{\mathcal{S}_k,\mathcal{T}_k,P_{k,n,r}\}}
	\sum_{k=1}^{K}\qquad\sum_{\mathclap{\substack{n \in \mathcal{S}_k\\r=\mathcal{T}_k(i),\text{ s.t. }\mathcal{S}_k(i)=n}}}P_{k,n,r},
	\end{equation*}
	Subject to:
	\begin{numcases}{}
	\sum_{n \in \mathcal{S}_k}R_{k,n,r} = R_{k,req}, \forall k, 1 \leq k \leq K \label{condition 1}\\
	P_{k,n,r} \geq 0, n \in \mathcal{S}_k, r \in \mathcal{T}_k, 1 \leq k \leq K \label{condition 2}\\
	P_{k_2,n,r} \geq P_{k_1,n,r}, \forall n \in \mathcal{S}_k , 1 \leq k \leq K \label{condition 3}
	\end{numcases}
	\textcolor{black}{The problem consists in finding the optimal subcarrier-RRH-user allocation, as well as the optimal power allocation over the allocated subcarriers, so as to minimize the objective function that is the total transmit power of the cell. This must be done under the rate constraints (\ref{condition 1}), positive power constraints (\ref{condition 2}), and power multiplexing constraints (\ref{condition 3}). The first constraint imposes a minimum rate requirement $R_{k,req}$ for every user $k$,  that must be achieved over the subcarriers $\mathcal{S}_k$ allocated to $k$. The second condition ensures that all power variables remain non-negative (a null power variable corresponds to an unallocated subcarrier). Finally, the last constraint accounts for the power multiplexing conditions where the power level of the signal of the weak user, $P_{k_2,n,r}$, must be greater than the power level of the signal of the strong user, $P_{k_1,n,r}$.} \textcolor{black}{Solving this optimization problem resides in determining the optimal allocation sets $\mathcal{S}_k$, $\mathcal{T}_k$ for every user $k$, as well as finding the optimal power allocation over the allocated subcarriers. Therefore, the optimization problem at hand is mixed combinatorial and non-convex, that is why we resort to suboptimal solutions for the joint subcarrier-RRH-user assignment and power allocation problem.} 
	Moreover, compared to the case of NOMA-CAS signaling, an additional dimension is added, corresponding to the determination of the best RRH to power each multiplexed subcarrier. Therefore, we study separately the cases of powering the multiplexed signals on a subcarrier by a common RRH (Section III), or by different RRHs (Section IV). We optimize the performance of both allocation schemes and then combine the RRH selection strategies into a unified algorithm which outperforms its predecessors.
	
	\section{Resource Allocation Techniques for the Case a Single Powering RRH per Subcarrier }
		
	\subsection{Power minimization procedure }\label{sec:RuntimeEnhancment}	
Given the intractability of the optimal subcarrier and power allocation solutions, a greedy approach was introduced in \cite{R18} for the CAS context, which aims at minimizing the power decrease resulting from every new subcarrier allocation: user-subcarrier assignment is determined by first selecting the most power consuming user and then allocating to it the available subcarrier that best reduces its requested transmit power from the base station. \textcolor{black}{Note that in DAS, this step resides in the selection of the best subcarrier-RRH couple from the space of $S\times R$ subcarrier-antenna pairs}.

In the OMA phase, power allocation is performed using a recursive low-complexity waterfilling technique. For this reason, we start by revisiting the waterfilling principle in order to introduce several
procedures for reducing the complexity of both orthogonal and non-orthogonal RA phases. Then, we extend the obtained solution to the DAS context.\\
	
	Given a user $k$ allocated a total of $N_k$ subcarriers and having a waterlevel $w_k(N_k)$, the addition of a subcarrier-RRH pair $(n_a,r)$ in the OMA phase decreases the waterline if and only if its channel gain verifies \cite{R18}:
	\begin{equation}\label{Omachannelcondition}
	h^2_{k,n_a,r} > \frac{\sigma^2}{w_k(N_k)}
	\end{equation}The new waterlevel, as well as the power incurred by the subcarrier assignment in the OMA phase, can be determined by means of the following equations:
	\begin{align}
	&w_k (N_k+1)=\frac{(w_k (N_k))^{N_k/(N_k+1)}}{(h_{k,n_a,r}^2/\sigma^2)^{1/(N_k+1)}}\label{waterdrop}\\
	&\Delta P_{k,n_a,r} = (N_k+1)w_k(N_k+1)-N_kw_k(N_k)-\frac{\sigma^2}{h^2_{k,n_a,r}}\label{DeltaPOMA}
	\end{align}
	For more detailed information about (\ref{Omachannelcondition}), (\ref{waterdrop}), and (\ref{DeltaPOMA}), the reader is referred to \cite{R18}. Note that when a CAS configuration is considered, $r$ designates the central (unique) BS antenna.The subcarrier minimizing the power decrease  for user $k$ is also the subcarrier with the best channel for this user, as shown in Appendix \ref{app:1}. This equivalence will allow us to decrease the complexity of the OMA phase with respect to \cite{R18}.
	
	Next, the user pairing  phase is considered, i.e. the assignment of second users to subcarriers in the NOMA phase. As stated in \cite{R18}, when allocating a subcarrier to user $k$ as second user, the waterline of the solely occupied subcarriers by $k$ must be decreased in order to avoid any excess in rate (compared to its required rate) and thus in power. In addition, the initial waterlevel for every user $k$ in the NOMA phase is the final waterline obtained in the OMA phase. In \cite{R18}, a dichotomy-based waterfilling technique \cite{R19} is used after each new pairing in the NOMA phase to determine the power level on each sole subcarrier of user $k$. For this purpose, we derive next an alternative iterative waterfilling calculation for the NOMA phase, with a significant complexity reduction compared to dichotomy-based waterfilling. 
	
	Let  $\mathcal{S}_k^{sole}$ be the set of solely allocated subcarriers to user $k$, $N_k^{sole}$ the cardinal of this set denoted by $N_k^{sole} = \vert \mathcal{S}_k^{sole}\vert$, and $R_k^{sole}$ the total rate achieved by user $k$ over its subcarriers in $\mathcal{S}_k^{sole}$. We have:
	\begin{equation*}
	R_k^{sole} = \sum_{\mathclap{\substack{n \in \mathcal{S}_k\\r=\mathcal{T}_k(i),\text{ s.t. }\mathcal{S}_k(i)=n}}}\frac{B}{S} \log_2\Bigg(1+\frac{P_{k,n,r}h_{k,n,r}^2}{\sigma^2}\Bigg),
	\end{equation*}
	with $P_{k,n,r} = w_k(N_k^{sole})-\sigma^2/h_{k,n,r}^2$, where $w_k(N_k^{sole})$ is the waterline corresponding to $\mathcal{S}_k^{sole}$. \\ Therefore, $R_k^{sole}$ can be rewritten as:
	\begin{align}
	&R_k^{sole}=\sum_{n \in \mathcal{S}_k^{sole}}\frac{B}{S} \log_2\bigg(\frac{w_k(N_k^{sole})h_{k,n,r}^2}{\sigma^2}\bigg) \text{, leading to:} \nonumber\\
	&w_k(N_k^{sole}) = \bigg(2^{R_k^{sole}S/B}\prod_{n \in \mathcal{S}_k^{sole}}\frac{\sigma^2}{h_{k,n,r}^2}\bigg)^{\frac{1}{N_k^{sole}}} \label{waterleveldrop}
	\end{align}
	On the other hand, $R_k^{sole}$ is calculated by subtracting from $R_{k,req}$ the rates of user $k$ on the subcarriers where $k$ is either first or second user. When user $k$ is assigned as a second user to a subcarrier $n$, the corresponding rate gain is calculated using (\ref{rate2equation}), with a power level on $n$ given by FTPA:
	\[
	P_{k_2,n,r}=P_{k_1,n,r}h_{k_2,n,r}^{-2\alpha}/h_{k_1,n,r}^{-2\alpha}
	\]
	where $P_{k_1,n,r}$ is the power of the first user previously allocated to subcarrier $n$ in the OMA phase.
	This additional rate corresponds to the rate decrease $\Delta R_{k,n,r}$ that should be compensated for on the sole subcarriers of $k$, so as to ensure the
	global rate constraint $R_{k,req}$. In other words, the variation that the rate $R_k^{sole}$ undergoes is opposite to the rate addition that comes along the new subcarrier assignment. We can write the new rate that must be achieved on $\mathcal{S}_k^{sole}$ as $R_k^{sole'} = R_k^{sole}+\Delta R_{k,n,r}$ where the rate decrease $\Delta R_{k,n,r}$ is negative. Using (\ref{waterleveldrop}), the new waterline on the set $\mathcal{S}^{sole}_k$ can then be derived as follows:
	\begin{align}
	&w_k'(N_k^{sole})=\Bigg(2^{R_k^{sole'}S/B}\prod_{n \in \mathcal{S}_k^{sole}}\frac{\sigma^2}{h_{k,n,r}^2}\Bigg)^{\frac{1}{N_k^{sole}}} \nonumber \\
	&w_k'(N_k^{sole})=2^{\frac{\Delta R_{k,n,r}S}{BN_k^{sole}}}w_k(N_k^{sole})\label{equ:NOdichotomy}
	\end{align}
	This expression of the waterline update in the NOMA phase enables the fast computation  of the potential power decrease corresponding to any candidate subcarrier-RRH pair, using:
	\begin{align}
	\Delta P_{k_2,n,r} &= N_{k_2}^{sole}\bigg(w_{k_2}^{\prime}(N_{k_2}^{sole})-w_{k_2}(N_{k_2}^{sole}\bigg) + P_{k_2,n,r}\label{DeltaPNOMA}
	\end{align}
	This reduced complexity algorithm can then be directly extended to the DAS context. \textcolor{black}{In the OMA phase, the subcarrier selection in CAS turns into a subcarrier-RRH pair assignment in DAS}. Therefore, the space of possible links to attribute to the user is enlarged by the factor $R$. However, in the NOMA phase, we restrict the transmission of the paired subcarrier to the same RRH serving the first user. Indeed, when different RRHs are chosen to power the multiplexed subcarrier, special features need to be addressed as will be shown in Section \ref{InformationTheory}. This NOMA-DAS method will be referred to as SRRH (meaning Single RRH per subcarrier). The corresponding details are presented in Algorithm \ref{alg:NOM-SRRH}, where $\mathcal{U}_p$ is the set of users whose power level can still be decreased, $\mathcal{S}_p$ is the set of unallocated subcarriers, and $\mathcal{S}_f$ is the set of subcarriers assigned a first user without a second user.
	\begin{algorithm}[H]
		\caption{SRRH}\label{alg:NOM-SRRH}
		\begin{algorithmic}
			\STATE\textbf{\small Initialization:} \text{ \small $\mathcal{S}_p= \textlbrackdbl 1:S \textrbrackdbl$}
			\STATE\text{\small \qquad\qquad\qquad $\mathcal{U}_p = \textlbrackdbl 1:K \textrbrackdbl$}
			\STATE\text{\small\qquad\qquad\qquad $\mathcal{S}_f = \emptyset $}
			\STATE\STATE\textbf{\small Phase 1:} \textcolor{gray}{\small// Worst-Best-H phase}
			\STATE\small Take the user whose best subcarrier-RRH link is the lowest among users and assign it its best subcarrier-RRH pair with the needed power to reach $R_{k,req}$. Repeat until all users have one allocated subcarrier-RRH pair moved from $\mathcal{S}_p$ to $\mathcal{S}_f$.
			\STATE \STATE 
			\textbf{\small Phase 2: }\textcolor{gray}{\small// Orthogonal multiplexing (single-user assignment)} 
			\STATE $k^*\!=\!\mathop{\arg\max}\limits_k P_{k,tot} \text{\small \textcolor{gray}{//identify the most power-consuming user}}$
			\STATE \text{$(n^*,r^*) = \mathop{\arg\max}\limits_{(n,r),\text{ s.t. }n\in \mathcal{S}_p \& (\ref{Omachannelcondition})} h_{k^*,n,r}$ \textcolor{gray}{// identify its most favorable}}
			\STATE \text{\qquad\qquad\qquad\qquad\quad\qquad\quad\quad\quad\;\; \textcolor{gray}{// subcarrier-RRH pair}}
			\STATE Calculate $w_{k^*}(N_{k^*}+1),\Delta P_{k^*,n^*,r^*}$ using (\ref{waterdrop}) and (\ref{DeltaPOMA})
			\STATE \textbf{\small If} $\Delta P_{k^*,n^*,r^*}<-\rho$ \textcolor{gray}{\small// $(n^*,r^*)$ allows a significant power}
			\STATE \text{\qquad\qquad\qquad\qquad\quad\quad\;\, \textcolor{gray}{// decrease}}
			\STATE \text{\small\quad Attribute $(n^*,r^*)$ to $k^*$,}
			\STATE \text{\small\quad Remove $n^*$ from $\mathcal{S}_p$,}
			\STATE \text{\small\quad Add $n^*$ to $\mathcal{S}_f$,}
			\STATE \text{\small\quad Update $P_{k^*,tot}$}
			\STATE \textbf{\small Else }remove $k^*$ from $\mathcal{U}_p$ \textcolor{gray}{\small // $k^*$'s power can no longer be decreased}
			\STATE \text{\qquad\qquad\quad\qquad\quad\quad\quad\;  \textcolor{gray}{// in OMA}}
			\STATE\STATE Repeat Phase 2 until no more subcarriers can be allocated
					\STATE	\STATE \text{\small$\mathcal{U}_p=\textlbrackdbl1:K\textrbrackdbl$}
			\algstore{Pause1}
		\end{algorithmic}
	\end{algorithm}
	\begin{algorithm}
		\begin{algorithmic}
			\algrestore{Pause1}
			\STATE \textbf{\small Phase 3: } \textcolor{gray}{\small // NOMA pairing}
			\STATE
			\STATE $k_2=\mathop{\arg\max}\limits_k P_{k,tot}$
			\STATE \textbf{\small For }every $n \in \mathcal{S}_f$ s.t. $h_{k_2,n,r}<h_{k_1,n,r}$ \textcolor{gray}{\small // $r$ is the RRH powering}
			\STATE \text{\qquad\qquad\qquad\qquad\qquad\qquad\qquad\qquad\; \textcolor{gray}{// user $k_1$ on $n$}}
			\STATE \text{\small\quad Calculate $P_{k_2,n,r}$ through FTPA}
			\STATE \text{\small\quad Calculate $w^{\prime}_{k_2}(N_{k_2}^{sole})$ using (\ref{equ:NOdichotomy})}
			\STATE \text{\small\quad Calculate $\Delta P_{k_2,n,r}$} using (\ref{DeltaPNOMA})
			\STATE \textbf{\small End for}
			\STATE $n^*=\mathop{\arg\min}\limits_n \Delta P_{k_2,n,r}$
			\STATE \textbf{\small If }$\Delta P_{k_2,n^*,r}<-\rho$
			\STATE \text{\small \quad Assign $k_2$ on $n^*$ and remove $n^*$ from $\mathcal{S}_f$}
			\STATE \text{\small \quad Fix $P_{k_1,n^*,r^*}$ and $P_{k_2,n^*,r^*}$, update $P_{k_2,n,r}, \forall n\in \mathcal{S}_{k_2}^{sole}$}
			\STATE \textbf{\small Else} \text{remove $k_2$ from $\mathcal{U}_p$}
			\STATE \small Repeat Phase 3 until $\mathcal{S}_f = \emptyset \lor \mathcal{U}_p=\emptyset$
		\end{algorithmic}
	\end{algorithm}
	
	In Algorithm \ref{alg:NOM-SRRH}, we start by ensuring that all users reach their targeted rates in the Worst-Best-H phase. From that point onward, the total power of the system decreases with every subcarrier allocation (by at least $\rho$).
	
	The threshold $\rho$ is chosen in such a way to strike a balance between the power efficiency and the spectral efficiency of the system, since unused subcarriers are released for use by other users or systems. Each time a user $k_2$ is paired with a user $k_1$ on a subcarrier $n^*$, the powers of $k_1$ and $k_2$ on $n^*$ are kept unvaried, i.e. they will no longer be updated at subsequent iterations. Actually, in both phases 2 and 3, an iteration results in either the allocation of a subcarrier-RRH pair, or the dismissing of a user from the set $\mathcal{U}_p$ of active users (in case of a negligible power decrease). Either ways, the total number of available subcarriers or active users is decreased by one in every iteration. Therefore, phases 2 and 3 involve at most $\vert\mathcal{S}_p\vert-K$ and $\vert\mathcal{S}_f\vert$ iterations respectively. In the worst case scenario, we have $\vert\mathcal{S}_f\vert =\vert\mathcal{S}_p\vert =S$. These considerations are central to the complexity analysis led in Section \ref{Sec:Complexity}, and they prove the stability of Algorithm \ref{alg:NOM-SRRH}.

	\subsection{Enhancement of the power minimization procedure through local power optimization (LPO)}\label{sec:NOMADBS-OPA}
	The power decrease incurred by a candidate subcarrier $n$ in the
	third phase of SRRH is greatly influenced by the
	amount of power $P_{k_2 , n ,r}$ allocated to user $k_2$ on $n$ using FTPA. Indeed, the addition of a new subcarrier translates into an increase of the power level allocated to the user on the one hand, and conversely into a power decrease for the same user due to the
	subsequent waterline reduction on its sole subcarriers on the other hand. Therefore, we propose to optimize the value
	of $P_{k_2 , n ,r}$ in such a way that the resulting user power reduction
	is minimized:
	\begin{equation*}
	\min_{P_{k_2,n,r}}\Delta P_{k_2}
	\end{equation*}
	Subject to:
	\begin{equation*}
	P_{k_2,n,r}\geq P_{k_1,n,r}
	\end{equation*}
	By replacing (\ref{equ:NOdichotomy}) into (\ref{DeltaPNOMA}), and expressing 
	$R_{k_2 ,n,r}$ using (\ref{rate2equation}), we can formulate the Lagrangian of this optimization problem as:
	\begin{align*}
	&L(P_{k_2,n,r},\lambda)=  P_{k_2,n,r} \\
	&+ N_{k_2}^{sole}w_{k_2}(N_{k_2}^{sole}) \left(
	\bigg(1+\frac{P_{k_2,n,r}h^2_{k_2,n,r}}{P_{k_1,n,r}h^2_{k_2,n,r}+\sigma^2}\bigg)^{\!\!\!-\frac{1}{N_{k_2}^{sole}}}\!-1\right)\\
	&+\lambda (P_{k_2,n,r}-P_{k_1,n,r})
	\end{align*}
	where $\lambda$ is the Lagrange multiplier.\\
	The corresponding Karush-Khun-Tucker (KKT) conditions are:
	\begin{equation*}
	\left\{\begin{split} 
	&\!\!1\!\!+\!\!\lambda
	\!\!-\!\!\frac{w_{k_2}(N_{k_2}^{sole})h^2_{k_2,n,r}}{P_{k_1,n,r}h^2_{k_2,n,r}+\sigma^2} \left(1+\frac{P^*_{k_2,n,r}h^2_{k_2,n,r}}{P_{k_1,n,r}h^2_{k_2,n,r}+\sigma^2}\right)^{\!\!\!\mkern-5mu \frac{-N_{k_2}-1}{N_{k_2}}}\!\!\!\!\!\!\!\!\!\!\!\!\!\!=\,\,\,0\\
	&\!\!\lambda (P^*_{k_2,n,r}-P_{k_1,n,r})=0 
	\end{split}
	\right.
	\end{equation*}
	We can check that the second derivative of the Lagrangian is always positive, and therefore the corresponding solution is the global minimum. For $\lambda=0$, this optimum is:
	\begin{multline}
	P^*_{k_2,n,r} = \left(\bigg(\frac{w_{k_2}(N_{k_2}^{sole})h^2_{k_2,n,r}}{P_{k_1,n,r}h^2_{k_2,n,r}+\sigma^2}\bigg)^{\frac{N^{sole}_{k_2}}{N^{sole}_{k_2}+1}}-1\right)\\\left(P_{k_1,n,r}+\frac{\sigma^2}{h^2_{k_2,n,r}}\right)
	\label{optimalpower}
	\end{multline} For $\lambda\neq 0$, $P^*_{k_2,n,r}=P_{k_1,n,r}$. However in such cases, with no power difference between the two paired users, successful SIC decoding is jeopardized at the receiver side for the first user. To overcome such a problem, we take:
	\begin{equation}
	P^*_{k_2,n,r} = P_{k_1,n,r}(1+\mu),\label{safety}
	\end{equation}
	with $\mu$ a positive safety power margin that depends on practical SIC implementation. In other terms, if the obtained $P^*_{k_2,n,r}$ in (\ref{optimalpower}) verifies the power constraint inequality, it is retained as the optimal solution, otherwise, it is taken as in (\ref{safety}). This method, referred to as ``SRRH-LP'', operates similarly to Algorithm \ref{alg:NOM-SRRH}, except for the FTPA power allocation which is replaced by either (\ref{optimalpower}) or (\ref{safety}). 
	
	\section{Resource Allocation Techniques for the Case of Two Powering RRHs per Subcarrier}
	Having extended previous CAS RA schemes to the DAS context and enhanced the corresponding solutions, the rest of the paper aims at designing specific NOMA RA schemes capturing the unique properties that arise in DAS. We start by developing the theoretical foundation lying behind SIC implementations when different RRHs are used to power the multiplexed signals on a subcarrier. The results show that under some well defined conditions, both paired users can perform SIC on the subcarrier. Finally, we propose several RA schemes taking advantage of the capacity gains inherent to mutual SIC and combine them with single SIC techniques.
	\subsection{Theoretical foundation}	\label{InformationTheory}
	In the case where the same RRH powers both multiplexed users on a subcarrier, there always exists one strong user at a given time which is the user having the best subcarrier-RRH link. However, this isn't necessarily the case when different RRHs are chosen to power the subcarrier, since the concept of weak and strong users is only valid relatively to a specific transmitting antenna. Indeed, the greater diversity provided by powering multiplexed subcarriers by different RRHs involves four instead of two different user-RRH links and thus opens the possibility of having more than one ``strong'' user at a time.
	
	%Consider the case where the users $k_1$ and $k_2$, collocated on subcarrier $n$, are powered by two different RRHs, respectively $r_1$ and $r_2$.
	\begin{theorem}
		Two users $k_1$ and $k_2$, paired on subcarrier $n$ and powered by two different RRHs, respectively $r_1$ and $r_2$, can both perform SIC if:
		\begin{align}
		h_{k_1,n,r_2} \geq 	h_{k_2,n,r_2}\label{equ:MSICcondition1}\\
		h_{k_2,n,r_1} \geq 	h_{k_1,n,r_1}\label{equ:MSICcondition2}
		\end{align}
	\end{theorem}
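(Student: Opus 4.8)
The plan is to translate the statement ``both users can perform SIC'' into two interference-cancellation feasibility conditions and then show that (\ref{equ:MSICcondition1})--(\ref{equ:MSICcondition2}) are precisely what is needed to satisfy them. First I would write down the received model on subcarrier $n$: user $k_1$, served by $r_1$, observes $\sqrt{P_{k_1,n,r_1}}\,h_{k_1,n,r_1}s_1+\sqrt{P_{k_2,n,r_2}}\,h_{k_1,n,r_2}s_2$ plus noise, and symmetrically user $k_2$, served by $r_2$. The principle stated earlier for the common-RRH case (``$k_i$ cancels $k_j$ iff $h_{k_j,n}<h_{k_i,n}$'') generalizes here to: a user can decode and remove the other's signal provided it can decode that signal at least as well as its intended recipient does, under the interference present before cancellation (each user treats the not-yet-removed signal as noise). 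Hence $k_1$ can cancel $s_2$ if $\mathrm{SINR}_{k_1}(s_2)\ge \mathrm{SINR}_{k_2}(s_2)$, and $k_2$ can cancel $s_1$ if $\mathrm{SINR}_{k_2}(s_1)\ge \mathrm{SINR}_{k_1}(s_1)$.

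Next I would form these two SINR ratios explicitly and cross-multiply. For the $s_2$ comparison this reduces to
\[
h_{k_1,n,r_2}^2\big(P_{k_1,n,r_1}h_{k_2,n,r_1}^2+\sigma^2\big)\ \ge\ h_{k_2,n,r_2}^2\big(P_{k_1,n,r_1}h_{k_1,n,r_1}^2+\sigma^2\big),
\]
and for the $s_1$ comparison to the symmetric inequality obtained by swapping the roles of the two users and RRHs. I would then close each inequality factor-by-factor: (\ref{equ:MSICcondition1}) gives $h_{k_1,n,r_2}^2\ge h_{k_2,n,r_2}^2$ (the numerators), while (\ref{equ:MSICcondition2}) gives $h_{k_2,n,r_1}^2\ge h_{k_1,n,r_1}^2$, hence $P_{k_1,n,r_1}h_{k_2,n,r_1}^2+\sigma^2\ge P_{k_1,n,r_1}h_{k_1,n,r_1}^2+\sigma^2$ (the interference-plus-noise terms). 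Since every quantity is non-negative, the product of the larger factors dominates and the inequality holds; the same two conditions, used in the opposite pairing, settle the $s_1$ comparison. This shows that each user decodes the other's message at least as reliably as its intended owner, so both can cancel and then enjoy an interference-free channel for their own signal.

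The step I expect to be the crux is recognizing why both conditions are simultaneously necessary, and it is here that the DAS setting departs from the common-RRH case. When a single RRH powers both signals, the interfering signal reaches the two receivers through the same antenna, so the power-dependent term in the denominators cancels and a lone gain comparison $h_{k_1,n}$ versus $h_{k_2,n}$ decides everything. With two RRHs the interference travels over distinct links ($r_1\!\to\! k_1$ as opposed to $r_1\!\to\! k_2$), the $P_{k_1,n,r_1}$ terms no longer cancel, and neither (\ref{equ:MSICcondition1}) nor (\ref{equ:MSICcondition2}) alone suffices: one needs the direct-link gain to dominate in the numerator and the cross-link gain to dominate in the denominator at the same time. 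I would therefore emphasize this coupling, and flag as a separate point of care the rate/power bookkeeping, namely that the comparison above certifies cancellability at the pre-cancellation (matched-interference) rates, so that the actual transmit rates together with the power-multiplexing and safety margin (cf.\ (\ref{safety})) must be kept within this cancellable region for the mutual-SIC scheme to be realizable.
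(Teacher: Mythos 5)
Your proposal is correct and follows essentially the same route as the paper: both reduce ``$k_i$ can cancel $s_j$'' to the condition that $k_i$ decodes $s_j$ at least as well as its intended recipient does (the paper phrases this as $R_{k_2}^{(k_1)}\ge R_{k_2}^{(k_2)}$ and $R_{k_1}^{(k_2)}\ge R_{k_1}^{(k_1)}$, which is equivalent to your SINR comparison by monotonicity of $\log_2(1+x)$), and both verify the resulting cross-multiplied inequality term by term using (\ref{equ:MSICcondition1})--(\ref{equ:MSICcondition2}); your factor-by-factor domination argument is just the unexpanded form of the paper's sign analysis of $X-Y$ and $Z-T$, and your closing remark on the single-RRH case mirrors the paper's $r_1=r_2$ discussion.
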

	
	\begin{proof}
		\textcolor{black}{Let $s_1$ be the signal of user $k_1$ emitted by RRH $r_1$ with a power $P_{k_1,n,r_1}$, and let $s_2$ be the signal of user $k_2$ emitted by RRH $r_2$ with a power $P_{k_2,n,r_2}$. Therefore, the channel conditions experienced by every signal arriving at a given user are different: at the level of $k_1$, the power levels of the signals $s_1$ and $s_2$ are $P_{k_1,n,r_1}h_{k_1,n,r_1}^2$ and $P_{k_2,n,r_2}h_{k_1,n,r_2}^2$ respectively. Similarly, at the level of $k_2$, the power levels of signals $s_1$ and $s_2$ are $P_{k_1,n,r_1}h_{k_2,n,r_1}^2$ and $P_{k_2,n,r_2}h_{k_2,n,r_2}^2$ respectively.} Depending on their respective signal quality, users $k_1$ and $k_2$ can decode the signal $s_2$ at different rates. Let $R_{k_2}^{(k_1)}$  be the necessary rate at the level of user $k_1$ to decode the signal of user $k_2$ in the presence of the signal of user $k_1$. And let $R_{k_2}^{(k_2)}$ the necessary rate to decode the signal of user $k_2$ at the level of $k_2$ in the presence of the signal of user $k_1$. The capacity that can be achieved by $k_1$ and $k_2$ over the signal $s_2$ and in the presence of the interfering signal $s_1$ are given by the Shannon limit:
		\begin{align}
		&R_{k_2}^{(k_1)} = \frac{B}{S} \log_2 \left(1+\frac{P_{k_2,n,r_2}h^2_{k_1,n,r_2}}{P_{k_1,n,r_1}h^2_{k_1,n,r_1}+\sigma^2}\right)\label{equ:channelConditionsU1}\\
		&R_{k_2}^{(k_2)} = \frac{B}{S}\log_2\left(1+\frac{P_{k_2,n,r_2}h^2_{k_2,n,r_2}}{P_{k_1,n,r_1}h^2_{k_2,n,r_1}+\sigma^2}\right)\label{equ:channelConditionsU2}
		\end{align}
		For $k_1$ to be able to perform SIC, the rates should satisfy the following condition:
		\begin{equation}
		R_{k_2}^{(k_1)} \geq R_{k_2}^{(k_2)} \label{equ:RcondUser1}
		\end{equation}
		By writing: $R_{k_2}^{(k_1)}-R_{k_2}^{(k_2)} = \frac{B}{S} \log_2\left(\frac{X}{Y}\right)$, we can express $X-Y$ as:
		\begin{align}
		X - Y = &P_{k_1,n,r_1}P_{k_2,n,r_2}\left(h^2_{k_1,n,r_2}h^2_{k_2,n,r_1}-h^2_{k_2,n,r_2}h^2_{k_1,n,r_1}\right)\nonumber\\
		+&\sigma^2 P_{k_2,n,r_2}\left(h^2_{k_1,n,r_2}-h^2_{k_2,n,r_2}\right) \label{equ:MSICuser1} 	
		\end{align}
		Similarly for user $k_2$, the rate condition that should be satisfied for the implementation of SIC at the level of $k_2$ is:
		\begin{equation}
		R_{k_1}^{(k_2)} \geq R_{k_1}^{(k_1)} \label{equ:RcondUser2} 
		\end{equation}%Similarly for user $k_2$, performing SIC requires the rate condition $R_{s_1}^{(k_2)} \geq R_{s_1}^{k_1} $. 
		$R_{k_1}^{(k_2)}$ and $R_{k_1}^{(k_1)}$ can be obtained from (\ref{equ:channelConditionsU1}) and (\ref{equ:channelConditionsU2}) by interchanging indexes 1 and 2. Also, by writing: $R_{k_1}^{(k_2)}-R_{k_1}^{(k_1)} = \frac{B}{S} \log_2\left(\frac{Z}{T}\right) $, we get:
		\begin{align}
		Z-T= &P_{k_2,n,r_2}P_{k_1,n,r_1}\left(h^2_{k_2,n,r_1}h^2_{k_1,n,r_2}-h^2_{k_1,n,r_1}h^2_{k_2,n,r_2}\right)\nonumber\\
		+&\sigma^2 P_{k_1,n,r_1}\left(h^2_{k_2,n,r_1}-h^2_{k_1,n,r_1}\right)   \label{equ:MSICuser2} 	
		\end{align} 
		Let us note that for the special case of $r_1=r_2=r$, we get:
		\begin{align*}
		X-Y&= \sigma^2 P_{k_2,n,r}\big(h^2_{k_1,n,r}-h^2_{k_2,n,r}\big)\\
		Z-T &=-\sigma^2 P_{k_1,n,r}\big(h^2_{k_1,n,r}-h^2_{k_2,n,r}\big)
		\end{align*}
		Therefore, either (\ref{equ:MSICuser1}) or (\ref{equ:MSICuser2}) is positive, not both, which justifies why only the stronger user, the one with the higher channel gain, is able to perform SIC as it has been stated in all previous works on NOMA \cite{R10,R9,R11,R15,R14}.
		
		For both users to perform SIC, the rate conditions (\ref{equ:RcondUser1}) and (\ref{equ:RcondUser2}) must be verified at the same time. By inspecting (\ref{equ:MSICuser1}) and (\ref{equ:MSICuser2}), we conclude that the following two conditions are sufficient to enable mutual SIC:
		\begin{gather*}
		h_{k_1,n,r_2} \geq 	h_{k_2,n,r_2}\\
		h_{k_2,n,r_1} \geq 	h_{k_1,n,r_1}
		\end{gather*}
		Indeed, these conditions ensure the positivity of each of the two terms in both $X-Y$ and $Z-T$. This concludes our proof.
	\end{proof}
	Regarding the power multiplexing constraints, the key is to design the power allocation scheme in such a way that the received power of the first signal to be decoded is larger than the power of the other signal. The resulting power conditions for users $k_1$ and $k_2$ respectively become:
	\begin{align*}
	P_{k_1,n,r_1}h^2_{k_1,n,r_1} \leq P_{k_2,n,r_2}h^2_{k_1,n,r_2}\\
	P_{k_2,n,r_2}h^2_{k_2,n,r_2} \leq P_{k_1,n,r_1}h^2_{k_2,n,r_1}
	\end{align*}
	They can be combined into the following condition:
	\begin{equation}
	\frac{h^2_{k_1,n,r_1}}{h^2_{k_1,n,r_2}} \leq \frac{P_{k_2,n,r_2}}{P_{k_1,n,r_1}} \leq \frac{h^2_{k_2,n,r_1}}{h^2_{k_2,n,r_2}}\label{equ:PowerConstraints}
	\end{equation}
	\begin{remark}\label{Remark}
		If (\ref{equ:MSICcondition1}) and (\ref{equ:MSICcondition2}) are true, then $\frac{h_{k_1,n,r_1}^2}{h_{k_1,n,r_2}^2} \leq \frac{h^2_{k_2,n,r_1}}{h^2_{k_2,n,r_2}}$. In this case, a  PA scheme can be found to allow a mutual SIC, i.e. there exist $P_{k_1,n,r_1}$ and $P_{k_2,n,r_2}$ such that (\ref{equ:PowerConstraints}) is true.
	\end{remark}
	
	Finally, the conditions (\ref{equ:MSICcondition1}) and (\ref{equ:MSICcondition2}) are sufficient but not necessary for the application of mutual SIC. Actually, the conditions for the application of mutual SIC lie in the positivity of (\ref{equ:MSICuser1}) and (\ref{equ:MSICuser2}). If any of (\ref{equ:MSICcondition1}) or (\ref{equ:MSICcondition2}) is not valid, the power terms in (\ref{equ:MSICuser1}) and (\ref{equ:MSICuser2}) should be considered, since they affect the sign of both equations. However, a closer examination of (\ref{equ:MSICuser1}) and (\ref{equ:MSICuser2}) reveals that in practical systems, their numerical values are greatly dominated by their first common term, since in general $\sigma^2 << Ph_{k,n,r}^2$. In that regard, a simpler constraint is derived on the channel gains:
	\begin{equation}
	h_{k_1,n,r_1}h_{k_2,n,r_2} \leq h_{k_2,n,r_1}h_{k_2,n,r_1}    \label{equ:FinalMutSIcCondition}
	\end{equation}
	This constraint will be used instead of (\ref{equ:MSICcondition1}) and (\ref{equ:MSICcondition2}) in the sequel. Note that condition (\ref{equ:FinalMutSIcCondition}) also ensures the existence of a PA scheme that will allow a mutual SIC. When both users $k_1$ and $k_2$ perform SIC on a subcarrier $n$, their reachable rates on $n$ are given by:
	\begin{align}
	R_{k_1,n,r_1} &= \frac{B}{S}\log_2\left(1+\frac{P_{k_1,n,r_1}h_{k_1,n,r_1}^2}{\sigma^2}\right)\label{rateSIC1}\\
	R_{k_2,n,r_2} &= \frac{B}{S}\log_2\left(1+\frac{P_{k_2n,r_2}h_{k_2,n,r_2}^2}{\sigma^2}\right)\label{rateSIC2}
	\end{align}
	Following the introduction of mutual SIC, the RA strategy should be modified accordingly. Therefore, the next sections describe the
	development of novel RA techniques that can \mbox{benefit} from this new potential of the NOMA-DAS combination.
	\subsection{Mutual SIC-based power minimization without power multiplexing constraints}
	The new RA problem in hand is still combinatorial, which motivates the proposal of suboptimal RA schemes in the following sections. 
	
	In addition to the selection of different antennas in the pairing phase of Algorithm \ref{alg:NOM-SRRH}, the key modifications that must be accounted for, when moving from single SIC to mutual SIC RA schemes, involve:
	\begin{itemize}
		\item Subcarrier subset selection: only the subcarrier-RRH links satisfying (\ref{equ:FinalMutSIcCondition}) are considered for potential assignment in mutual SIC configurations.
		\item Power assignment: the power multiplexing constraint (\ref{equ:PowerConstraints}) must be accounted for.
	\end{itemize} 
	
	We first address a relaxed version of the problem where the power multiplexing constraints are disregarded. This consideration reverts the optimal PA scheme in the pairing phase to the user-specific waterfilling solution in OMA. Therefore, the pairing phase in mutual SIC becomes a simple extension of the OMA phase from Algorithm \ref{alg:NOM-SRRH}. This method referred to as MutSIC-UC will be used as a lower bound on the performance of mutual SIC algorithms (in terms of the total transmit power).
	
	To compensate for the disregarded constraints, subcarrier assignment should be followed by a power optimization step as shown in Appendix \ref{app:2}. However, the set of possible power corrections grows exponentially with the number of multiplexed subcarriers. Therefore, alternative suboptimal strategies accounting for the power multiplexing constraints at every subcarrier assignment are investigated in the following sections.
	
	\subsection{Mutual SIC power minimization with direct power adjustment (DPA)}
	From a power minimization perspective, the power distribution obtained through waterfilling is the best possible PA scheme. However, compliance with the power multiplexing conditions is not guaranteed; therefore, a power adjustment should occasionally be made on the multiplexed subcarriers.
	
	When an adjustment is needed, the new value of $P_{k_2,n,r_2}$ in (\ref{equ:PowerConstraints}) should fall between $P_{k_1,n,r_1}h^2_{k_1,n,r_1}/h^2_{k_1,n,r_2}$ and $P_{k_1,n,r_1}h^2_{k_2,n,r_1}/h^2_{k_2,n,r_2}$ (the value of $P_{k_1,n,r_1}$ is fixed). However, since any deviation from the waterfilling procedure degrades the performance of the solution, this deviation must be rendered minimal. Therefore, $P_{k_2,n,r_2}$ is set at the nearest limit of the inequality (\ref{equ:PowerConstraints}), with some safety margin $\mu$ accounting for proper SIC decoding. After this adjustment, the powers on the multiplexed subcarrier are kept unvaried, as in Algorithm \ref{alg:NOM-SRRH}. This procedure will be referred to as MutSIC-DPA; its details are presented in Algorithm \ref{alg:NOMA-DPA}.
	
	\begin{algorithm}[h]
		\caption{MutSIC-DPA}
		\label{alg:NOMA-DPA}
		\textbf{\small Phase 1:}\\
		\small Worst-Best-H followed by OMA single-user assignment\\\\
		\textbf{Phase 2:} \textcolor{gray}{\small // NOMA MutSIC pairing}\\
		\[k_2 = \mathop{\arg\max}_k P_{k,tot}    \]
		\[S_c = \{(n,r_2)\text{ s.t. (\ref{equ:FinalMutSIcCondition}) \& (\ref{Omachannelcondition}) are verified}\}\]
		\textbf{\small For} every candidate couple $(n,r_2) \in S_c$\\
		\text{\quad Calculate $P_{k_2,n,r_2}^*$ and $\Delta P_{k_2,n,r_2}$} using (\ref{waterdrop}) and (\ref{DeltaPOMA})\\
		\[\text{\quad \textbf{If} } P_{k_2,n,r_2}^* \text{verifies (\ref{equ:PowerConstraints}), set }P_{k_2,n,r_2}=P_{k_2,n,r_2}^* \] 
		\[\text{\quad \textbf{If} }\frac{P_{k_2,n,r_2}}{P_{k_1,n,r_1}} < \frac{h_{k_1,n,r_1}^2}{h_{k_1,n,r_2}^2} \text{ set }P_{k_2,n,r_2} \!=\! (1+\mu)P_{k_1,n,r_1}\!\frac{h_{k_1,n,r_1}^2}{h_{k_1,n,r_2}^2} \]
		\[\text{\quad and estimate } \Delta P_{k_2,n,r_2} \text{ using (\ref{equ:NOdichotomy}) and (\ref{DeltaPNOMA})}\]
		\[\text{\quad \textbf{If} } \frac{P_{k_2,n,r_2}}{P_{k_1,n,r_1}} > \frac{h_{k_2,n,r_1}^2}{h_{k_2,n,r_2}^2}\text{ set } P_{k_2,n,r_2} \!=\! (1-\mu)P_{k_1,n,r_1}\!\frac{h_{k_2,n,r_1}^2}{h_{k_2,n,r_2}^2}\]
		\[\text{\quad and estimate } \Delta P_{k_2,n,r_2} \text{ using (\ref{equ:NOdichotomy}) and (\ref{DeltaPNOMA})} \]
		\textbf{End for}\\
		\[(n^*,r^*_2) = \mathop{\arg\min}_{(n,r_2)} \Delta P_{k_2,n,r_2}\]
		Continue similarly to SRRH
	\end{algorithm}
	\subsection{Mutual SIC power minimization with sequential optimization for power adjustment (SOPA)}
	In order to improve on the MutSIC-DPA
	technique, we propose to replace the adjustment and power
	estimation steps by a sequential power optimization. Instead of
	optimizing the choice of $P_{k _2 , n , r_2}$ over the candidate couple $(n, r_2 )$, we look for a wider optimization in which powers of both first
	and second users on the considered subcarrier are adjusted, in a
	way that their global power variation is minimal:
	\begin{align*}
	\{P_{k_1,n,r_1},P_{k_2,n,r_2}\}^* = \!\!\!\!\mathop{\arg\max}_{P_{k_1,n,r_1},P_{k_2,n,r_2}}\!\!\!\!(-\Delta P_{k_1,n,r_1}-\Delta P_{k_2,n,r_2})
	\end{align*}
	subject to:
	\begin{align*} 
	\frac{h_{k_1,n,r_1}^2}{h_{k_1,n,r_2}^2}\leq \frac{P_{k_2,n,r_2}}{P_{k_1,n,r_1}}\\
	\frac{P_{k_2,n,r_2}}{P_{k_1,n,r_1}} \leq \frac{h_{k_2,n,r_1}^2}{h_{k_2,n,r_2}^2}
	\end{align*}
	The power variations of users $k_2$ and $k_1$ are given by:
	\begin{align*}
	\Delta P_{k_2,n,r_2}\!\!\!= &N^{sole}_{k_2}w_{k_2}\!(N^{sole}) \!\!\left(\!\!\!\left(1+\frac{P_{k_2,n,r_2}h_{k_2,n,r_2}^2}{\sigma^2}\right)^{\!\!\!-\frac{1}{N_{k_2}^{sole}}}\!\!\!\!\!\!\!\!\!-1\!\right)\nonumber\\
	&\!\!\!+P_{k_2,n,r_2} 
	\end{align*}
	\begin{align*}
	\Delta P_{k_1,n,r_1} &= (N_{k_1}^{sole}-1)W_{I,k_1}\left( 2^{-\frac{\Delta R_{k_1}S}{(N_{k_1}^{sole}-1)B}}-1 \right)\!+\! P_{k_1,n,r_1}\\&-P_{k_1,n,r_1}^I
	\end{align*}
	where $P^I_{k_1,n,r_1}$ is the initial power allocated on $n$ to $k_1$ and $W_{I,k_1}$ the initial waterline of $k_1$ (before pairing with user $k_2$). Also, the rate variation of user $k_1$ on $n$, due to the power adjustment on $n$, can be written as:
	\[\Delta R_{k_1} = \frac{B}{S}\log_2\left(\frac{\sigma^2+P_{k_1,n,r_1}h^2_{k_1,n,r_1}}{\sigma^2+P^I_{k_1,n,r_1}h^2_{k_1,n,r_1}}\right) \]
	The Lagrangian of this problem is:
	\begin{multline*}
	L(P_{k_1,n,r_1},P_{k_2,n,r_2},\lambda_1,\lambda_2) \!\!=\!\!-\lambda_1\!\!\left(\!\! P_{k_1,n,r_1}\frac{h^2_{k_1,n,r_1}}{h^2_{k_1,n,r_2}}-P_{k_2,n,r_2} \!\!\right)\\- \lambda_2\left( P_{k_2,n,r_2}-P_{k_1,n,r_1}\frac{h^2_{k_2,n,r_1}}{h^2_{k_2,n,r_2}}\right)-\Delta P_{k_1,n,r_1}-\Delta P_{k_2,n,r_2}  
	\end{multline*}
	The solution of this problem must verify the following conditions:
	\begin{align*}
	&\nabla L (P_{k_1,n,r_1},P_{k_2,n,r_2},\lambda_1,\lambda_2)=0\\
	&\lambda_1\left( P_{k_1,n,r_1}h^2_{k_1,n,r_1}/h^2_{k_1,n,r_2}-P_{k_2,n,r_2} \right) = 0\\
	& \lambda_2\left( P_{k_2,n,r_2}-P_{k_1,n,r_1}h^2_{k_2,n,r_1}/h^2_{k_2,n,r_2}\right) =0\\
	& \lambda_1, \lambda_2 \geq 0 
	\end{align*}
	Four cases are identified:
	\begin{align*}
	&\text{1. }\lambda_1 = 0,\lambda_2=0\\
	&\text{2. }\lambda_1 \ne 0,\lambda_2=0 \rightarrow P_{k_2,n,r_2}=P_{k_1,n,r_1}h^2_{k_1,n,r_1}/h^2_{k_1,n,r_2}\\
	&\text{3. } \lambda_1=0, \lambda_2 \ne 0 \rightarrow P_{k_2,n,r_2}=P_{k_1,n,r_1}h^2_{k_2,n,r_1}/h^2_{k_2,n,r_2}\\
	&\text{4. } \lambda_1 \ne 0, \lambda_2 \ne 0
	\end{align*}
	Case 1 corresponds to the unconstrained waterfilling solution applied separately to the two users. Case 4 is generally
	impossible, since the two boundaries of the inequality (\ref{equ:PowerConstraints}) would be equal. Considering case 2, by replacing $P_{k_2,n,r_2}$ in terms of $P_{k_1,n,r_1}$ in the Lagrangian and by taking the derivative with respect to $P_{k_1,n,r_1}$, we can verify that $P_{k_1,n,r_1}^*$ is the solution of the following nonlinear equation:
	\begin{multline}\label{equ:nonlinearEqation}
	W_{I,k_2}\frac{h^2_{k_1,n,r_1}h^2_{k_2,n,r_2}}{h^2_{k_1,n,r_2}\sigma^2} \!\!\left( 1+ \frac{P_{k_1,n,r_1}h^2_{k_2,n,r_2}h^2_{k_1,n,r_1}}{\sigma^2} \right)^{\!\!\!\!-\frac{1}{N^{sole}_{k_2}}-1}\\+
	\frac{W_{I,k_1}h^2_{k_1,n,r_1}}{\sigma^2+P^I_{k_1,n,r_1}h^2_{k_1,n,r_1}}\left( \frac{\sigma^2+P_{k_1,n,r_1}h^2_{k_1,n,r_1}}{\sigma^2+P^I_{k_1,n,r_1}h^2_{k_1,n,r_1}} \right)^{-\frac{1}{N_{k_1}^{sole}-1}-1}\\ - \frac{h^2_{k_1,n,r_1}}{h^2_{k_1,n,r_2}}-1  	=0
	\end{multline}
	Note that in practice, we also take into consideration the safety
	power margin $\mu$ in the calculation of $P_{k_1,n,r_1}$. Similar calculations
	are performed for case 3. The solution that yields the lowest $\Delta P$ is retained. Also, if none of the cases provides positive power
	solutions, the current candidate couple $(n, r_2 )$ is discarded. This
	method of optimal power adjustment (OPAd) is employed both at the subcarrier allocation stage (for the selection of the best candidate couple $\
	(n,r_2)$ for user $k_2$) and at the power allocation stage (following the selection of the subcarrier-RRH pair). It will be referred to
	as ``MutSIC-OPAd''.\\
	
	Finally, in order to decrease the complexity of ``MutSIC-OPAd'', inherent to the resolution of a nonlinear
	equation for every subcarrier-RRH candidate, we consider a ``semi-optimal'' variant of this technique,
	called ``MutSIC-SOPAd'': at the stage where
	candidate couples $(n, r_2 )$ are considered for potential assignment
	to user $k_2$, DPA is used for power adjustment to
	determine the best candidate in a cost-effective way. Then, the
	preceding OPAd solution is applied to allocate power levels to
	users $k_1$ and $k_2$ on the retained candidate.
	
	\subsection{Combination of the allocation of mutual and single SIC subcarriers in DAS}
	
	To further exploit the space diversity inherent to DAS and minimize the system transmit power, single SIC and mutual SIC algorithms are combined to take advantage of the full potential of NOMA techniques. Given the superiority of mutual SIC over single SIC schemes, we prioritize the allocation of subcarriers allowing mutual SIC by first applying MutSIC-SOPAd. Then, the remaining set of
	solely assigned subcarriers is further examined for potential
	allocation of a second user in the single SIC context, using the same
	RRH as that of the first assigned user. LPO is used for power allocation in this
	second phase. This method will be referred to as ``Mut\&SingSIC''.

	\section{Complexity Analysis}\label{Sec:Complexity}
	In this section, we analyze the complexity of the different
	allocation techniques proposed in this study. The complexity of
	OMA-CAS, NOMA-CAS and OMA-DAS is also considered for comparison. It is studied by
	considering an implementation that includes the runtime
	enhancement procedures introduced in section \ref{sec:RuntimeEnhancment}. In OMA-CAS and OMA-DAS scenarios, only phases 1 and 2 of Algorithm \ref{alg:NOM-SRRH} are applied, with either
	$R$=1 (for OMA-CAS) or $R \neq1$ (for OMA-DAS). In NOMA-CAS, Algorithm \ref{alg:NOM-SRRH} is used with $R$=1.
	
	For OMA-CAS, we consider that the channel matrix
	is reordered so that the subcarriers for each user are sorted by
	the decreasing order of channel gain. This step accelerates
	the subsequent subcarrier allocation stages and has a complexity of
	$O(KS\log(S))$. Following the Worst-Best-H phase, each iteration
	complexity is mainly dominated by the search of the most power
	consuming user with a cost $O(K)$. Assuming all the remaining $S-K$ subcarriers are allocated, the final complexity is
	$O(KS\log(S)+(S-K)K)$.
	
	Each allocation step in the pairing phase of NOMA-CAS
	consists of the identification of the most power consuming user,
	followed by a search over the subcarrier space, and a power
	update over the set of sole subcarriers for the user, with an average
	number of $S/K$ subcarriers. Assuming $S$ paired subcarriers, the total complexity of
	NOMA-CAS is $O(KS\log(S)+(S-K)K+S(K+S+S/K))$.
	
	In OMA-DAS, we consider an initial sorting of each user
	subcarrier gains, separately for each RRH, with a cost of
	$O(KSR\log(S))$. Then, an allocation cycle consists of user
	identification, followed by the search of the RRH providing the
	subcarrier with the highest channel gain. This corresponds to a
	complexity of $O(K+R)$. Therefore, the total complexity is:
	$O(KSR\log(S)+(S-K)(K+R))$. Consequently, the total complexity
	of SRRH and SRRH-LPO is
	$O(KSR\log(S)+(S-K)(K+R)+S(K+S+S/K))$.
	\textcolor{black}{In order to assess the efficiency of SRRH-LPO, we compare our solution to the optimal power allocation technique developed in \cite{R14}. More specifically, we apply SRRH-LPO to determine the user-subcarrier-RRH assignment; then, in a second phase, we apply the optimal PA in \cite{R14}. This technique will be referred to as SRRH-OPA; its complexity analysis and comparison with SRRH-LPO is provided in appendix \ref{AppendixComplexity}.}

	Concerning MutSIC-UC, by following the same
	reasoning as for OMA-DAS, and accounting for the search of an
	eventual collocated user for at most $S$ subcarriers, we get a total
	of $O(KSR\log(S)+(S-K)(K+R)+S(K+R-1))$.
	
	As for MutSIC-DPA, the total complexity is
	$O(KSR\log(S)+(S-K)(K+R)+S(K+S(R-1)+S/K))$, where the $S(R-
	1)$ term stems from the fact that the search over the subcarrier
	space in the pairing phase is conducted over all combinations of
	subcarriers and RRHs, except for the RRH of the first user on the
	candidate subcarrier.
	
	Regarding MutSIC-OPAd, let $C$ be the complexity of
	solving the nonlinear equation (\ref{equ:nonlinearEqation}). The total complexity is
	therefore
	$O(KSR\log(S)+(S-K)(K+R)+S(K+S(R-1)C+S/K))$.
	Given that MutSIC-SOPAd solves (\ref{equ:nonlinearEqation}) only once
	per allocation step, its complexity is $O(KSR\log(S)+(S-
	K)(K+R)+S(K+S(R-1)+S/K+C))$. Consequently, the complexity
	of Mut\&SingSIC is $O(KSR\log(S)+(S-
	K)(K+R))+S(K+S(R-1)+S/K+C)+S(K+S+S/K)$. The additional
	term corresponds to the Single SIC phase which is similar to the
	pairing phase in NOMA-CAS.
	\begin{table}[H]
		\caption{Approximate complexity of the different allocation techniques.}
		\label{tab:1}
		\begin{center}
			\setlength\tabcolsep{1pt}
			\begin{tabular}{|l | c| c|}
				\hline
				\rowcolor{lightgray} \multicolumn{1}{|c|}{\textbf{Technique}}  & \textbf{Complexity} \\%& \textbf{Example}
				\hline
				OMA-CAS & $O( KS \log( S ))$ \\%& $O(3993)$
				\hline
				NOMA-CAS & $O( S^2 + KS\log( S )) $ \\%& $O(8089)$
				\hline
				OMA-DAS & $O(KSR\log(S)) $\\
				\hline
				SRRH & $O( S^2 + KSR \log( S ))$\\% & $O(20066)$
				\hline
				SRRH-LPO  & $	O( S^2 + KSR \log( S ))$ \\%& $O(20066)$
				%				\hline
				%				SRRH-OPA & 	$f_{comp }( N _{OPA })$ & $f_{comp }(143)$\\%
				\hline
				MutSIC-UC & 	$O( KSR \log( S ))$ \\%& $O(15970)$
				\hline
				MutSIC-DPA & $O( S^2 R+KSR \log( S ))$ \\%& $O(32354)$
				\hline
				MutSIC-OPAd & $O( S^2 RC+KSR \log( S ) )$ \\%& $O(16384C + 15970)$
				\hline
				MutSIC-SOPAd &$O( S^2 R + SC+KSR \log( S ) )$ \\%& $O(32354+64C)$
				\hline
				Mut\&SingSIC & $O( S^2 R + SC+KSR \log( S ) )$ \\%& $O(32354+64C)$
				\hline
			\end{tabular}
		\end{center}
	\end{table}	
	To give an idea of the relative complexity orders, we present in Table \ref{tab:1} the approximate complexity of the different techniques. In fact, the complexity of the methods employing a numerical solver depends on the resolution cost $C$ that is dependent on the closeness of the initial guess to the actual solution. In that regard, we note that MutSIC-SOPAd is roughly $C$ times less complex than MutSIC-OPAd, and has a complexity comparable to MutSIC-DPA.
	
	\section{Performance Results}
	The performance of the different allocation techniques are
	assessed through simulations in the LTE/LTE-
	Advanced context \cite{ETSI_ChannelCoding_LTE_2011}. The cell is hexagonal with
	an outer radius $R_d$ of 500 m. For DAS, we consider four RRHs ($R=4$), unless specified otherwise. One antenna is located at the
	cell center, while the others are uniformly positioned
	on a circle of radius 2$R_d/$3 centered at the cell center. The number of users in the cell is $K=15$, except for Fig. \ref{fig:5}. The system
	bandwidth $B$ is 10 MHz, and it is divided into $S=64$ subcarriers except for Fig. \ref{fig:5}. The transmission medium is a
	frequency-selective Rayleigh fading channel with a root mean square delay spread of 500
	ns. We consider distance-dependent path-loss with a decay factor
	of 3.76 and lognormal shadowing with an 8 dB variance. The
	noise power spectral density $N_0$ is $4.10^{-18}$ mW/Hz. In this study,
	we assume perfect knowledge of the user channel gains by the
	BBU. \textcolor{black}{For typical system parameters, the system performance in terms of transmit power is mainly invariant with $\rho$, thus $\rho$ is set to $10^{-3}~W$. A detailed analysis of the system behavior in terms of $\rho$ can be found in \cite{oma} for OMA systems. The $\alpha$ decay factor in FTPA is taken equal to 0.5,and the safety power
		margin $\mu$ is set to 0.01. The performance results of OMA-CAS, NOMA-CAS and OMA-DAS are also shown for comparison.}\\

	Fig. \ref{fig:2} represents the total transmit power in the cell in terms
	of the requested rate considering only SRRH schemes for NOMA-based techniques.
	\begin{figure}[H]
	\includegraphics[width=\textwidth/2]{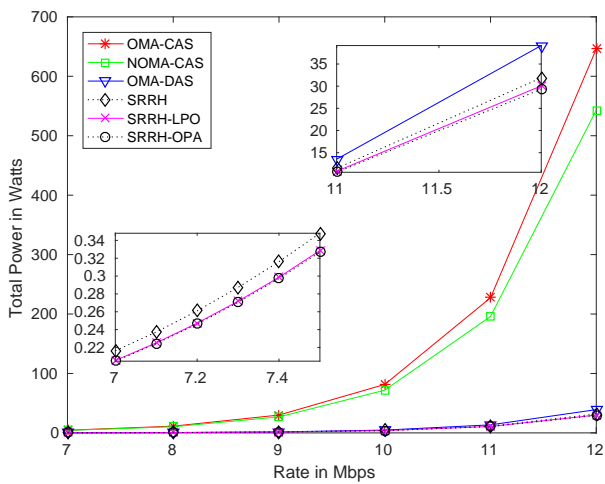}
		\caption{Total power in terms of $R_{k,req}$ for DAS and CAS scenarios, with OMA and NOMA-SRRH schemes}
		\label{fig:2}
	\end{figure}
	The results show that the DAS configuration greatly
	outperforms CAS: a large leap in power with a factor around 16
	is achieved with both OMA and NOMA signaling. At a target
	rate of 12 Mbps, the required total power using SRRH, SRRH-LPO and SRRH-OPA is respectively 17.6\%, 24.5\%, and 26.1\% less than in OMA-DAS. This shows a clear advantage of NOMA over OMA in the
	DAS context. Besides, applying LPO allows a power reduction
	of 7.7\% over FTPA, with a similar computational load. \textcolor{black}{The
		penalty in performance of LPO with respect to optimal PA is only 2\% at 12 Mbps, but with a greatly reduced complexity.}
	
	In Fig. \ref{fig:3}, the results are focused on the evaluation of mutual SIC and
	single SIC configurations. It can be seen that all three
	constrained configurations based on pure mutual SIC (MutSIC-DPA,
	MutSIC-SOPAd
	and
	MutSIC-OPAd) largely outperform SRRH-LPO. Their gain towards the latter is respectively 56.1\%, 63.9\% and 72.9\%, at a requested rate of 13 Mbps. The
	significant gain of optimal power adjustment towards its
	suboptimal counterpart comes at the cost of a significant
	complexity increase, as shown in Section \ref{Sec:Complexity}. The most power-efficient mutual SIC implementation is obviously MutSIC-UC, since it is designed to solve a relaxed version of the
	power minimization problem by dropping all power
	multiplexing constraints. Therefore, it only serves as a benchmark for assessing the other methods, because power
	multiplexing conditions are essential for allowing correct signal
	decoding at the receiver side.
	\begin{figure}[h]
		\includegraphics[width=\textwidth/2]{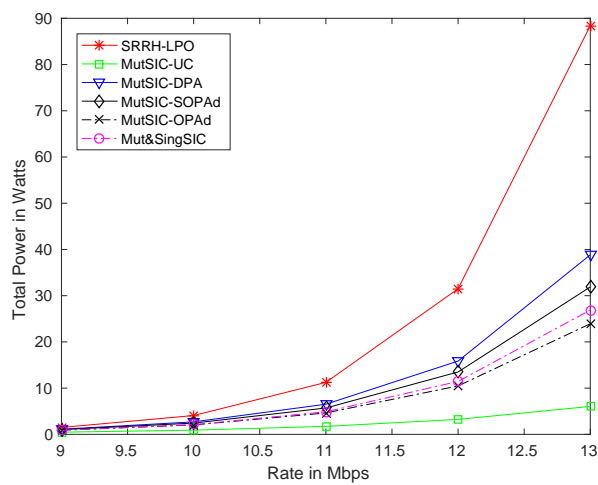}
		\caption{Total power in terms of $R_{k,req}$ for the proposed NOMA-DAS schemes}
		\label{fig:3}
	\end{figure}
	
	Except for the OPAd solution, the best global strategy remains the combination of mutual
	and single SIC subcarriers, since it allows a power reduction of 15.2\% and 15.6\% at 12 and 13 Mbps respectively, when compared to  
	MutSIC-SOPAd.
	
	Fig. \ref{fig:4} shows the influence of increasing the number of
	RRHs on system performance. As expected, increasing the
	number of spread antennas greatly reduces the overall power, either with single SIC or combined mutual and single SIC configurations. A significant power reduction is observed when $R$ is increased from 4 to 5, followed by a more moderate
	one when going from 5 to 7 antennas. The same behavior is
	expected for larger values of $R$. However, practical considerations like the overhead of CSI signaling exchange and the synchronization of the distributed RRHs, not to mention geographical deployment constraints, would suggest limiting the number of deployed
	antennas in the cell.
	\begin{figure}[h]
		\includegraphics[width=\textwidth/2]{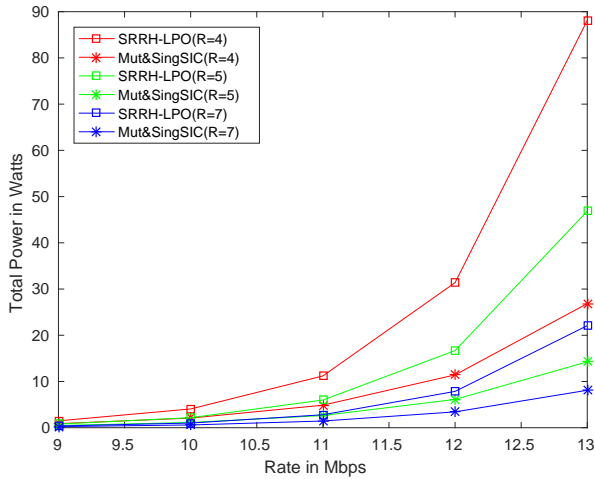}
		\caption{Total power in terms of $R_{k,req}$ for NOMA-DAS schemes, with $K$=15, $S$=64, and $R$=4, 5 or 7}
		\label{fig:4}
	\end{figure}
	
	In Fig. \ref{fig:5}, we show the performance for a varying number of
	users, for the case of 4 RRHs and 128 subcarriers. Results confirm that the allocation strategies based on mutual
	SIC, or combined mutual and single SIC, scale much better to
	crowded areas, compared to single SIC solutions. The power
	reduction of Mut\&SingSIC towards SRRH-LPO is 69.8\% and 78.2\% for 36 and 40 users
	respectively.
	\begin{figure}[h]
		\includegraphics[width=\textwidth/2]{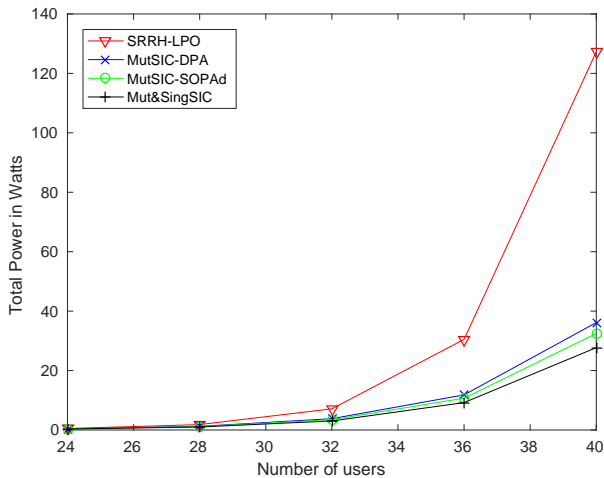}
		\caption{Total power in terms of the number of users for the NOMA-DAS schemes, with $R_{k,req}$=5 Mbps, $S$=128, and $R$=4}
		\label{fig:5}
	\end{figure}
	
	Table \ref{table:2} shows the statistics of the number of non-
	multiplexed subcarriers, the number of subcarriers where a
	mutual SIC is performed, and the number of subcarriers where a
	single SIC is performed. On average, SRRH-LPO uses single SIC
	NOMA on 25\% (resp. 32\%) of the subcarriers for $R _{k,req}$ = 9 Mbps
	(resp. 12 Mbps), while the rest of the subcarriers is solely allocated to users (a small proportion is not allocated at	all, depending on the power threshold $\rho$). On the other hand, the
	proportions are respectively 17\% and 23\% with MutSIC-SOPAd. Therefore, in light of the results of Figs.
	\ref{fig:3} and \ref{fig:5}, MutSIC-SOPAd not only outperforms SRRH-LPO from the requested transmit power perspective, but it
	also presents the advantage of yielding a reduced complexity at
	the User Equipment (UE) level, by requiring a smaller amount
	of SIC procedures at the receiver side. This shows the efficiency
	of the mutual SIC strategy, combined with appropriate power
	adjustment, over classical single SIC configurations.\\
	
	\begin{table}[h]
		\caption{Statistics of subcarrier multiplexing, for $K$=15, $S$=64, and $R$=4.}
		\begin{center}
			\label{table:2}
			\setlength\tabcolsep{6pt}
			\begin{tabular}{|l|c|c|c|}
				\hline
				\rowcolor{lightgray} \multicolumn{1}{|c|}{\textbf{Resource allocation}}  & \textbf{Non}  & \textbf{SC}  & \textbf{SC} \\
				\rowcolor{lightgray} \multicolumn{1}{|c|}{\textbf{technique}} & \textbf{Mux}& \textbf{MutSIC} & \textbf{SingSC}\\
				\rowcolor{lightgray} & \textbf{SC} & & \\
				\hline
				& \multicolumn{3}{c|}{$R_{k,req}$=9Mbps}\\
				\hline
				SRRH-LPO & 48.1 & - & 15.9\\
				\hline
				MutSIC-SOPAd & 53.4 & 10.6 & -\\
				\hline
				Mut\&SingSIC & 39.2 & 10.6 & 14.2\\
				\hline
				& \multicolumn{3}{c|}{$R_{k,req}$=12Mbps}\\
				\hline
				SRRH-LPO & 43.7 & - & 20.3\\
				\hline
				MutSIC-SOPAd & 49.4 & 14.6 & -\\
				\hline
				Mut\&SingSIC & 29 & 14.6 & 20.4\\
				\hline
				
			\end{tabular}
		\end{center}
	\end{table}
	It can be noted that in Mut\&SingSIC, 17\% (resp. 23\%) of the subcarriers are powered from different antennas. This shows the importance of
	exploiting the additional spatial diversity, combined with
	NOMA, inherent to DAS.
	
	\section{Conclusion}
	In this paper, various RA techniques were
	presented for minimizing the total downlink transmit power in
	DAS for 5G and beyond networks. We first proposed
	several enhancements to a previously developed method for CAS, prior to extending it to the DAS context.
	Furthermore, we unveiled some of the hidden potentials of DAS for
	NOMA systems and developed new techniques to make the most
	out of these advantages, while extracting their best
	characteristics and tradeoffs. Particularly, this study has enabled
	the design of NOMA with SIC decoding at both paired UE sides. Simulation results have shown the superiority of the proposed
	methods with respect to single SIC configurations. They also
	promoted mutual SIC with suboptimal power adjustment to the best tradeoff between transmit power and complexity at both the
	BBU and the UE levels. Several aspects of this work can be
	further explored, since many additional challenges need to be
	addressed to enhance the NOMA-DAS-specific resource
	allocation schemes. For instance, practical considerations can be incorporated in the
	study, such as imperfect antenna synchronization and limited
	CSI exchange. Furthermore, the study can be enriched by
	the use of MIMO antenna systems in a distributed context.  
	
	\section{Acknowledgment}
	This work has been funded with support from the Lebanese
	University and the PHC CEDRE program (research project between the Lebanese University and IMT Atlantique). Part of this
	work has been performed in the framework of the Horizon 2020
	project FANTASTIC-5G (ICT-671660), which is partly funded
	by the European Union.
	
	\begin{appendices}
		\section{}\label{app:1}
		\begin{proof}
			By inspecting (\ref{waterdrop}) and (\ref{DeltaPOMA}), we see that a higher channel gain ensures a lower waterline, but also a lower inverse channel gain in the expression of $\Delta P_{k,n_a,r}$. In order to prove that the subcarrier which yields the lowest $\Delta P$ is indeed the one with the best channel gain, we start by combining (\ref{waterdrop}) and (\ref{DeltaPOMA}), and we express the power decrease as:
			\begin{equation*}
			\Delta P_{k,n_a,r} = (N_k+1)\left(\frac{(w_k(N_k))^{N_k}}{h^2_{k,n_a,r}/\sigma^2}\right)^{\!\!\!1/(N_k+1)}\w\w\w\w-N_kw_k(N_k)-\frac{\sigma^2}{h^2_{k,n_a,r}}
			\end{equation*}
			By taking the derivative of $\Delta P_{k,n,r}$ with respect to $h_{k,n_a,r}$, we get:
			\begin{equation*}
			\frac{\partial\Delta P_{k,n,r}}{\partial h_{k,n_a,r}}=-2\frac{(\sigma^2)^{1/N_k+1}}{(h_{k,n_a,r})^{\frac{2}{N_k+1}+1}}\left(w_k(N_k)\right)^{N_k/(N_k+1)}+\frac{2\sigma^2}{h^3_{k,n_a,r}}
			\end{equation*}
			Therefore, we can verify that:
			\begin{equation*}
			\frac{\partial\Delta P_{k,n,r}}{\partial h_{k,n_a,r}} \leq 0 \Longleftrightarrow \frac{\sigma^2}{h^2_{k,n_a,r}} \leq \left(\frac{\sigma^2}{h^2_{k,n_a,r}}\right)^{1/N_k+1}\!\!\!\!\!\!\!\!\!\!\!\!\!\!\! \big(w_k(N_k)\big)^{N_k/(N_k+1)}
			\end{equation*}
			which directly leads to (\ref{Omachannelcondition}).\\
			We can deduce that, provided that (\ref{Omachannelcondition}) is verified by subcarrier $n_a$, $\Delta P_{k,n_a,r}$ is a monotonically decreasing function of $h_{k,n_a,r}$, which concludes the proof.
		\end{proof}
		
\section{Formulation of the Power Optimization Problem for the Constrained Case in Mutual SIC}
		\label{app:2}
		For a predefined subcarrier-RRH-user assignment, the constrained power minimization problem for power assignment can be cast as the solution of the following optimization problem:
		\mathleft
		\begin{align*}
		&\max_{\{P_{k,n,r}\}}\left(-\sum^K_{k=1} \sum^S_{n=1} \sum^R_{r=1} P_{k,n,r}\right),\\
		&\text{ Subject to:} \\
		&\sum_{n \in \mathcal{S}_k} \log_2 \left(1+\frac{P_{k,n,r}h^2_{k,n,r}}{\sigma^2}\right)=R_{k,req}, 1 \leq k \leq K\\
		&-\frac{P_{k_2,n,r_2}}{P_{k_1,n,r_1}} \leq -\frac{h^2_{k_1,n,r_1}}{h^2_{k_1,n,r_2}}, \forall n \in \mathcal{S}_{mSIC}\\
		&\frac{P_{k_2,n,r_2}}{P_{k_1,n,r_1}} \leq \frac{h^2_{k_2,n,r_1}}{h^2_{k_2,n,r_2}}, \forall n \in \mathcal{S}_{mSIC}\\
		\end{align*}
		where $\mathcal{S}_{mSIC}$ is the set of subcarriers undergoing a mutual SIC. The corresponding Lagrangian with multipliers $\lambda_k$ and $\beta_{i,n}$ is:
		\begin{align*}
		&	L(P,\lambda,\beta_1,\beta_2)=-\sum^K_{k=1} \sum^S_{n=1} \sum^R_{r=1} P_{k,n,r}\\
		&+\sum_{n \in S_{mSIC}}\beta_{1,n}\left(\frac{h^2_{k_2,n,r_1}}{h^2_{k_2,n,r_2}}-\frac{P_{k_2,n,r_2}}{P_{k_1,n,r_1}}\right)\\
		& + \sum_{n \in \mathcal{S}_{mSIC}}\beta_{2,n}\left(\frac{h^2_{k_2,n,r_2}}{h^2_{k_1,n,r_1}}-\frac{P_{k_1,n,r_1}}{P_{k_1,n,r_2}}\right)\\
		&+\sum_{k=1}^{K}\lambda_k \left(R_{k,req}- \sum_{n \in \mathcal{S}_k} \log_2\left(1+ \frac{P_{k,n,r}h^2_{k,n,r}}{\sigma^2}\right)\right)
		\end{align*}
		Writing the KKT conditions (not presented here for the sake of
		concision) leads to a system of $N_e$ non-linear equations with $N_e$ variables, where $N_e = 3 \vert S_{mSIC}\vert+K+S$ (taking into account the $S\!-\!\vert \mathcal{S}_{mSIC}\vert$ power variables on non-paired subcarriers). Knowing that $\beta_{1,n}$ and $\beta_{2,n}$ cannot be simultaneously non-zero, we have, for every subcarrier allocation scheme, a total of $3^{\vert \mathcal{S}_{mSIC}\vert}$ different possible combinations to solve, that is $3^{\vert \mathcal{S}_{mSIC}\vert}$ different variations of a square system of $2\vert \mathcal{S}_{mSIC}\vert+ K + S$ equations (per subcarrier allocation).
		\section{Complexity Analysis of SRRH-OPA and comparison with SRRH-LPO}
		\label{AppendixComplexity}
		SRRH-OPA consists in successively applying SRRH-LPO to set the subcarrier-RRH assignment, and afterwards applying the optimal PA described in \cite{R14}. Therefore, the complexity of SRRH-OPA equals that of SSRH-LPO added to the complexity of optimal PA which is discussed next.\\
		
		Following the optimal power formulation provided in \cite{R14}, the relaxed version of the problem is as follows:\\
		Let $K_n$ be the set of multiplexed users on subcarrier $n$, $\mathcal{N}_M$ the set of multiplexed subcarriers, $\mathcal{S}^{sole}$ the set of sole subcarriers with $N^{sole} = \vert \mathcal{S}^{sole} \vert$, $k_1(n)$ the first user over the subcarrier $n$, where $n$ is either a sole or a multiplexed subcarrier, $k_2(n)$ the second user over the subcarrier $n$, where $n$ is a multiplexed subcarrier, $r(n)$ the RRH powering the signals on the subcarrier $n$, $R_{k,n,r}$ the rate achieved by user $k$ on subcarrier $n$ powered by RRH $r$. Using the same rate to power conversion procedure as in \cite{R14}, the optimization problem can be expressed as follows:
		\begin{multline*}
		\min_{R_{k,n,r}}\sum_{n\in \mathcal{N}_M \cup \mathcal{S}^{sole}}\!\!\!\! \frac{a(n)\sigma^2}{h_1(n)}+\frac{(b(n)-1)\sigma^2}{h_2(n)}\left[\frac{1}{h_2(n)}+\frac{a(n)-1}{h_1(n)}\right]
		\end{multline*}
		Subject to: $$ \mathleft \sum_{n\in S_k} R_{k,n,r(n)}=R_{k,req},\forall k\in 1:K $$
		Where $h_1(n)=h_{k_1(n),n,r(n)}^2$, $h_2(n)=h_{k_2(n),n,r(n)}^2$, $\sigma^2 = N_0B/S$, $a(n)=2^{R_{k_1(n),n,r(n)}S/B}$, and $b(n)=2^{R_{k_2(n),n,r(n)}S/B}$. $R_{k_1(n),n,r(n)}$ is the rate achieved by the strong or sole user $k_1(n)$ on subcarrier $n$, and $R_{k_2(n),n,r(n)}$ is the rate delivered on the subcarrier $n$ to the user $k_2 (n)$. If $n$ happens to be a sole subcarrier, than $R_{k_2(n),n,r(n)}$ is null. The Lagrangian of this problem is given by:
		\begin{multline*}		
		L(R_{k,n,r},\lambda) = 
		\left[ \frac{(a(n)-1)\sigma^2}{h_1(n)} + \frac{\sigma^2}{h_2(n)}\right]		\frac{b(n)-1}{h_2(n)}\\
				+\sum_{n\in \mathcal{N}_M \cup \mathcal{S}^{sole}}\w\!\!\!\left(  a(n)-1  \right)\frac{\sigma^2}{h_1(n)} - \sum_{k=1}^{K}\lambda_k\left(\sum_{n=1}^{N}R_{k,n,r(n)}-R_{k,req}\!\right)   
		\end{multline*}
		After applying the KKT conditions, and including the $K$ rate constraints, we obtain a system of $ N _{sole} +2card(\mathcal{N} _M )+K$ non-linear equations and unknowns ($ N _{sole} +2card(\mathcal{N} _M)$ rate variables and $K$ Lagrangian multipliers). A numerical solver is used to determine the solution, namely the trust-region dogleg method. Since finding an exact expression of this method's complexity is cumbersome, we propose to provide instead the average execution time ratio of 
		SRRH-OPA with respect to SRRH-LPO, measured over a total of 1000 simulations at a rate of 12Mbps, for $K=15$ users, $S=64$ subcarriers and $R=4$ RRHs. We observed that the execution time of SRRH-OPA is more than the double the one of SRRH-LPO, while the performance improvement is of only 2\%. This showcases the efficiency of our LPO procedure, both in terms of its global optimal-like performance and in terms of its cost effective implementation.
	\end{appendices}

	\bibliographystyle{IEEEtran}
%		\bibliography{references}

% Generated by IEEEtran.bst, version: 1.14 (2015/08/26)

	\begin{IEEEbiography}
		[{\includegraphics[width=1in,height=1.25in,clip,keepaspectratio]{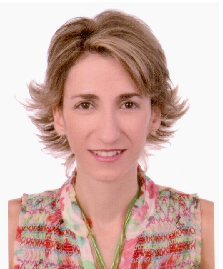}}]
		{Joumana Farah}
		Joumana Farah received the B.E. degree in Electrical Engineering from the Lebanese University, in 1998, the M.E. degree in Signal, Image, and Speech processing, in 1999, and the Ph.D. degree in mobile communication systems, in 2002, from the University of Grenoble, France. Since January 2010, she has held a Habilitation to Direct Research (HDR) from University Pierre and Marie Curie (Paris VI), France. She is currently a full-time professor with the Faculty of Engineering, Lebanese University, Lebanon. She has supervised a large number of Master and PhD theses. She has received several research grants from the Lebanese National Council for Scientific Research, the Franco-Lebanese CEDRE program, and the Lebanese University. She has nine registered patents and software and has coauthored a research book and more than 90 papers in international journals and conferences. Her current research interests include resource allocation techniques, channel coding, channel estimation, interference management, and distributed video coding. She was the General Chair of the 19th International Conference on Telecommunications (ICT 2012), and serves as a TPC member and a reviewer for several journals and conferences. 
	\end{IEEEbiography}

	\begin{IEEEbiography}
	[{\includegraphics[width=1in,height=1.25in,clip,keepaspectratio]{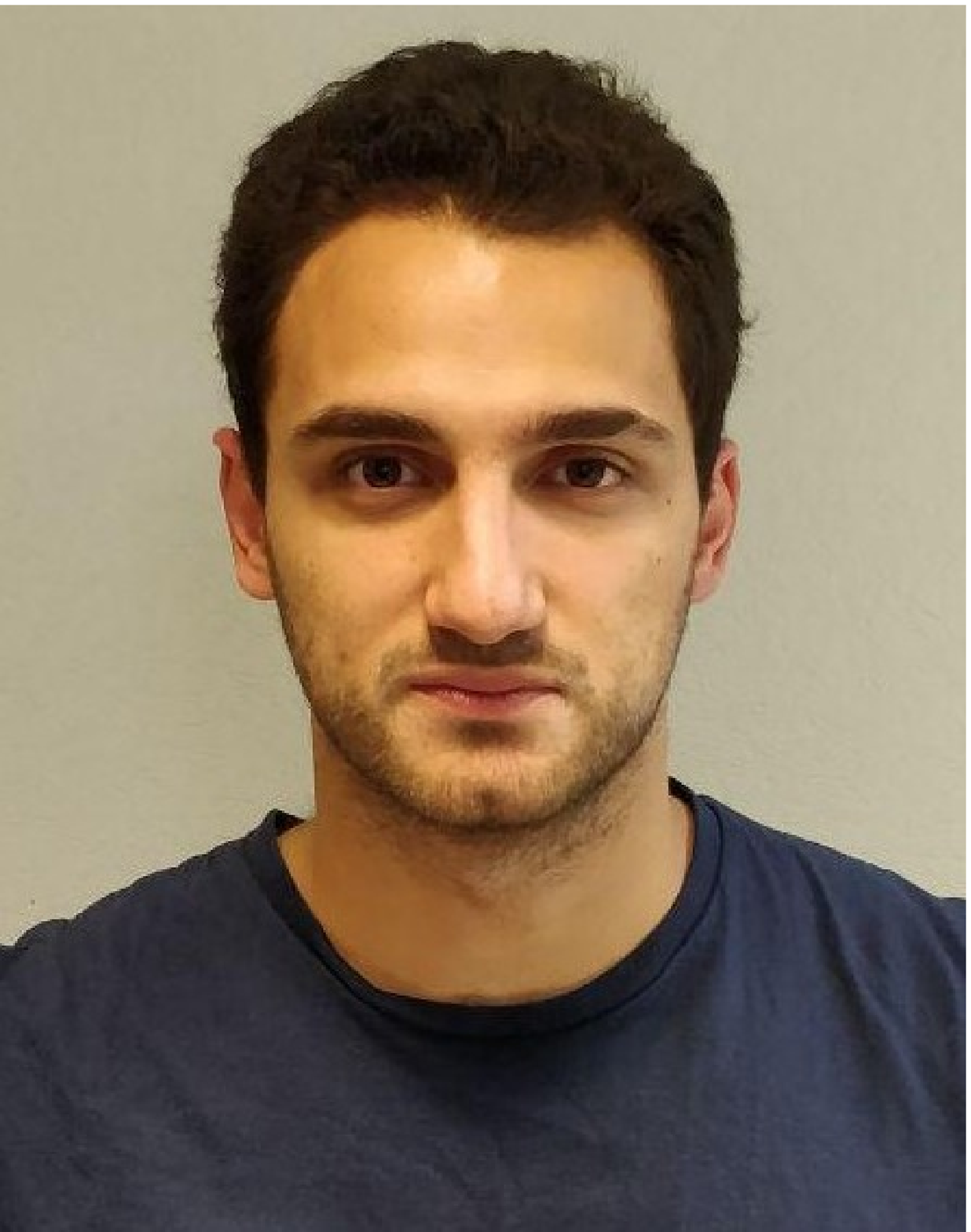}}]
	{Antoine Kilzi}
	Antoine Kilzi received his computer and communications engineering degree in 2017 from the Lebanese University. He is currently working towards the Ph.D. degree in  information and communication engineering at IMT Atlantique. His current research interests include resource allocation, non-orthogonal multiple access and coordinated multipoint systems.
\end{IEEEbiography}

		\begin{IEEEbiography}
		[{\includegraphics[width=1in,height=1.25in,clip,keepaspectratio]{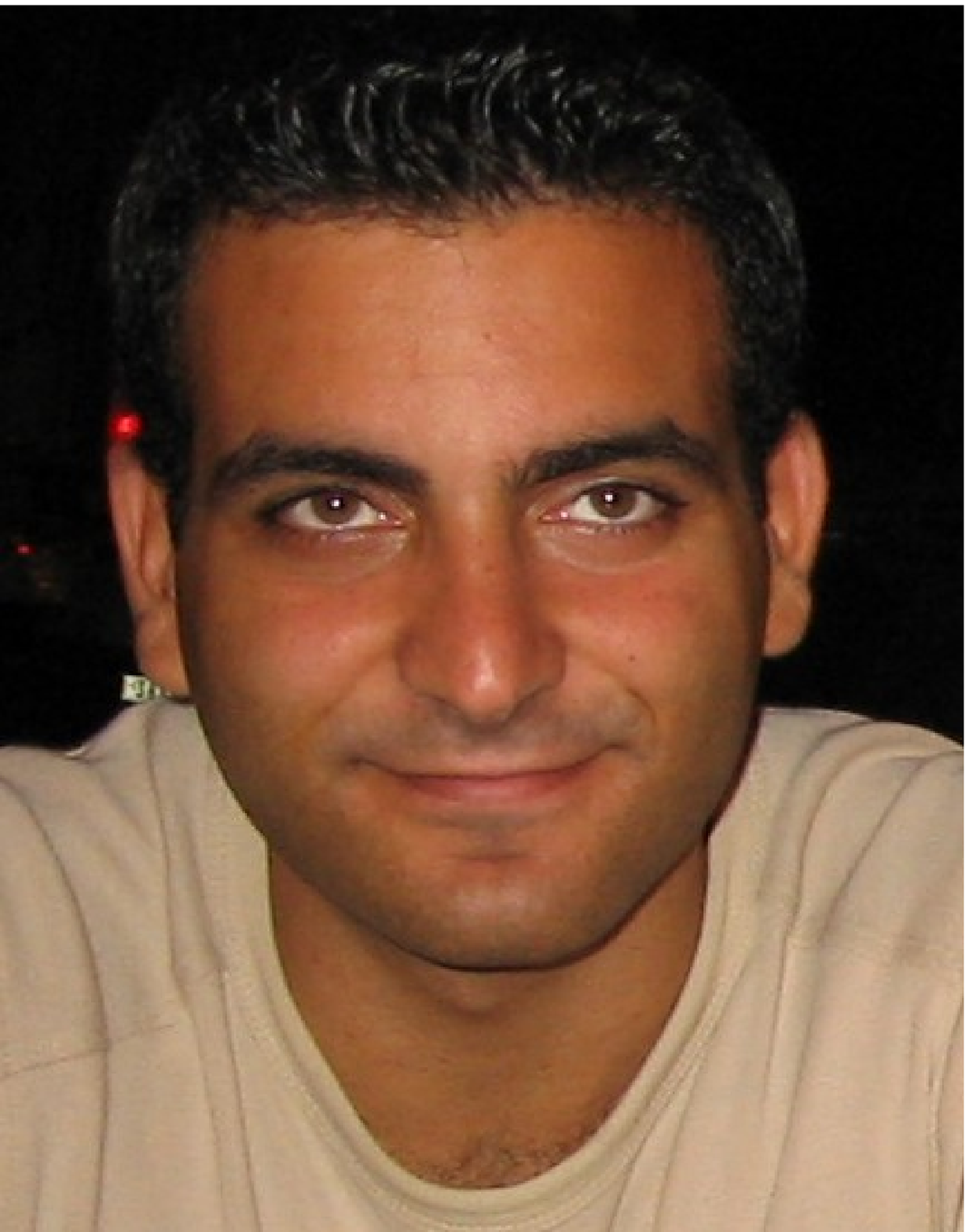}}]
		{Charbel Abdel Nour}
		Charbel Abdel Nour obtained his computer and communications engineering degree in 2002 from the Lebanese University, his Masters degree in digital communications from the University of Valenciennes, France, in 2003 and his PhD in digital communications from Telecom Bretagne, France in 2008. From June 2007 till October 2011, he worked as a post-doctoral fellow at the Electronics Department of Telecom Bretagne. He was involved in several research projects related to broadcasting and satellite communications. Additionally during the same period, he was active in the Digital Video Broadcasting DVB consortium where he had important contributions. Starting November 2011, Charbel holds an associate professor position at the Electronics Department of Telecom Bretagne. His interests concern the radio mobile communications systems, broadcasting systems, coded modulations, error correcting codes, resource and power allocation for NOMA, waveform design, MIMO and iterative receivers. Lately, he presented several contributions to the H2020 METIS and FANTASTIC5G projects and to the 3GPP consortium related to coding solutions for 5G.
	
	\end{IEEEbiography}

\begin{IEEEbiography}
	[{\includegraphics[width=1in,height=1.25in,clip,keepaspectratio]{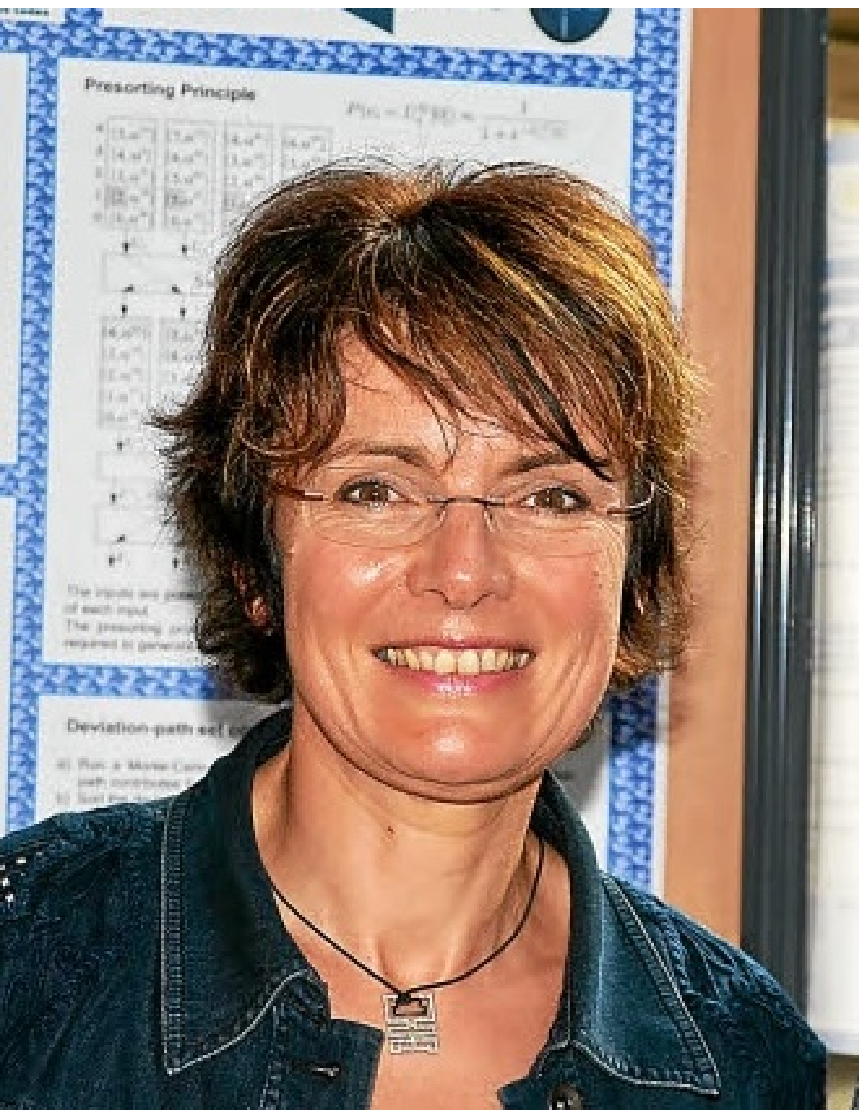}}]
	{Catherine Douillard}
	Catherine Douillard received the engineering degree in telecommunications from the Ecole Nationale Supérieure des Télécommunications de Bretagne, France, in 1988, the Ph.D. degree in electrical engineering from the University of Western Brittany, Brest, France, in 1992, and the accreditation to supervise research from the University of Southern Brittany, Lorient, France, in 2004. She is currently a full Professor in the Electronics Department of Telecom Bretagne where she is in charge of the Algorithm-Silicon Interaction research team. Her main research interests are turbo codes and iterative decoding, iterative detection, the efficient combination of high spectral efficiency modulation and turbo coding schemes, diversity techniques and turbo processing for multi-carrier, multi-antenna and multiple access transmission systems. In 2009, she received the SEE/IEEE Glavieux Award for her contribution to standards and related industrial impact. She was active in the DVB (Digital Video Broadcasting) Technical Modules for the definition of DVB-T2, DVB-NGH as chairperson of the "Coding, Constellations and Interleaving" task force and DVB-RCS NG standards. Since 2015, she has had several contributions in the FANTASTIC-5G and EPIC H2020 European projects intended for the definition of new techniques for 5G and beyond. 
\end{IEEEbiography}
\end{document}